
\documentclass[reqno]{amsart}
\usepackage{amssymb}

\newtheorem{theorem}{Theorem}
\newtheorem{proposition}[theorem]{Proposition}

\newtheorem{corollary}[theorem]{Corollary}

\theoremstyle{remark}

\numberwithin{equation}{section}



\begin{document}

\title[Boundary interactions for the difference Toda chain]
{Integrable boundary interactions for Ruijsenaars' difference Toda chain}

\author{J.F.  van Diejen}

\address{
Instituto de Matem\'atica y F\'{\i}sica, Universidad de Talca,
Casilla 747, Talca, Chile}

\email{diejen@inst-mat.utalca.cl}

\author{E. Emsiz}

\address{
Facultad de Matem\'aticas, Pontificia Universidad Cat\'olica de Chile,
Casilla 306, Correo 22, Santiago, Chile}
\email{eemsiz@mat.puc.cl}

\subjclass[2000]{81Q80, 81R12, 81U15, 33D52.}
\keywords{difference Toda chain, boundary interactions, quantum integrability, bispectral duality, $n$-particle scattering, hyperoctahedral $q$-Whittaker functions}

\thanks{This work was supported in part by the {\em Fondo Nacional de Desarrollo
Cient\'{\i}fico y Tecnol\'ogico (FONDECYT)} Grants \# 1130226 and  \# 1141114.}

\date{May 2014}

\begin{abstract}
We endow Ruijsenaars' open difference Toda chain with a one-sided boundary interaction of
Askey-Wilson type and diagonalize the quantum Hamiltonian by means of deformed hyperoctahedral $q$-Whittaker functions that arise as a $t=0$ degeneration of the Macdonald-Koornwinder multivariate Askey-Wilson polynomials.
This immediately entails the quantum integrability,  the bispectral dual system, and the $n$-particle scattering operator for the chain in question.
\end{abstract}

\maketitle

\section{Introduction}\label{sec1}
It is well-known that the open and closed Toda chains may be viewed as limits of the hyperbolic and elliptic Calogero-Moser-Sutherland particle systems, respectively \cite{sut:introduction,rui:relativistic,ino:finite,rui:finite-dimensional}.
More general integrable open Toda chains
with boundary interactions involving potentials of Morse type \cite{kos:quantization,goo-wal:classical,skl:boundary}
and of P\"oschl-Teller type \cite{ino:finite,kuz-jor-chr:new}
are recovered similarly as degenerations of the Olshanetsky-Perelomov-Inozemtsev generalized Calogero-Moser-Sutherland systems with hyperoctahedral symmetry
\cite{ino:finite,osh:completely,shi:limit,ger-leb-obl:quantum}.
Moreover, such limiting relations turn out to persist at the level of the Ruijsenaars-Schneider particle systems and Ruijsenaars' difference (a.k.a. relativistic) Toda chains
\cite{rui:relativistic,rui:finite-dimensional,rui:systems,eti:whittaker,ger-leb-obl:q-deformed,hal-rui:kernel,bor-cor:macdonald} as well as their hyperoctahedral counterparts
\cite{die:difference,che:whittaker}.  Specifically, in the hyperoctahedral case one recovers in this manner generalizations of
Ruijsenaars' open relativistic Toda chain with boundary interactions that were studied at the level of classical mechanics in Refs.
\cite{sur:discrete,die:deformations,sur:problem} and at the level of quantum mechanics in Refs.
\cite{kuz-tsy:quantum,die:difference,eti:whittaker,sev:quantum,che:whittaker}.

In the present work we consider the Hamiltonian of such an open difference Toda chain
endowed with a one-sided four-parameter boundary interaction of Askey-Wilson type.
Upon diagonalizing the quantum Hamiltonian in question by means of deformed hyperoctahedral $q$-Whittaker functions that arise as a $t=0$ degeneration of the Macdonald-Koornwinder polynomials \cite{koo:askey-wilson,mac:affine}, the quantum integrability, the bispectral dual system, and the
$n$-particle scattering operator are deduced. For special values of the Askey-Wilson parameters, our chain amounts to a difference counterpart of the $D_n$-type and the $A_{n-1}$-type quantum Toda chains with one-sided boundary potentials of  P\"oschl-Teller and Morse type, respectively.

The presentation is structured as follows. After introducing our difference Toda chain in Section \ref{sec2} and defining the deformed hyperoctahedral $q$-Whittaker functions in Section \ref{sec3}, the diagonalization of the Hamiltonian is carried out in Section \ref{sec4} by identifying the corresponding eigenvalue equation with the 
$t\to 0$ degeneration of a well-known  Pieri formula for the Macdonald-Koornwinder polynomials \cite{die:properties,mac:affine}.
The quantum integrals and the bispectral dual system are then discussed in Sections \ref{sec5} and \ref{sec6}, respectively. In Section \ref{sec7} analogous results
for a difference counterpart of the quantum Toda chain
with one-sided boundary potentials of Morse type are obtained by letting one of the boundary parameters tend to zero (which corresponds to a transition
from Askey-Wilson polynomials to continuous dual $q$-Hahn polynomials
\cite{koe-les-swa:hypergeometric}). We close in Section \ref{sec8} with an explicit description of the $n$-particle scattering operator that relies on a stationary-phase analysis that was performed in Refs. \cite{rui:factorized,die:scattering}.
Some useful properties of the Macdonald-Koornwinder multivariate Askey-Wilson polynomials have been collected in a separate appendix at the end.

\section{Difference Toda chain with one-sided boundary interaction of Askey-Wilson type}\label{sec2}
Formally, the Hamiltonian of our difference Toda chain is given by the difference operator \cite{die:difference}:
\begin{subequations}
\begin{align}\label{Ha}
H:=&  T_1+ \sum_{j=2}^{n-1}  (1-q^{x_{j-1}-x_{j}})T_j \\
& +\sum_{j=1}^{n-2} (1-q^{x_j-x_{j+1}})T_j^{-1} +(1-q^{x_{n-1}-x_{n}})(1-q^{x_{n-1}+x_{n}})T_{n-1}^{-1}\nonumber \\
& +w_+(x_n)(1-q^{x_{n-1}-x_{n}})T_{n} +
w_-(x_n)(1-q^{x_{n-1}+x_{n}})T_{n}^{-1} + U(x_{n-1},x_n), \nonumber
\end{align}
where
\begin{align}
w_+ (x) &:=\frac{ \prod_{0\leq r \leq 3} (1-{t}_r q^{x})}{(1-q^{2x})(1-q^{2x+ 1})},\quad
w_- (x) :=\frac{ \prod_{0\leq r \leq 3} (1-{t}_r^{-1} q^{x})}{(1-q^{2x})(1-q^{2x- 1})},\\
U(x,y) &: =  \sum_{\epsilon \in \{ 1,-1\}}  \frac{c_\epsilon(1-\epsilon q^{x+1/2})}{(1-\epsilon q^{y-1/2})(1-\epsilon q^{-y-1/2})}, 
\end{align}
with
\begin{equation}\label{Hd}
c_\epsilon := \frac{1}{2\sqrt{q^{-1}{t}_0{t}_1{t}_2{t}_3}}\prod_{0\leq r \leq 3} (1-\epsilon q^{-1/2}{t}_r) ,
\end{equation}
\end{subequations}
and $T_j$ ($j=1,\ldots ,n$) acts on functions $f:\mathbb{R}^n\to\mathbb{C}$ by a
unit translation of the $j$th position variable $$(T_jf)(x_1,\ldots,x_n)=f(x_1,\ldots,x_{j-1},x_j+1,x_{j+1},\ldots,x_n).$$
Here $q$  denotes a scale parameter and the parameters ${t}_r$ ($r=0,\ldots ,3$)
play the role of coupling parameters for the  boundary interaction of Askey-Wilson type.
Upon setting ${t}_2=-{t}_3=q^{1/2}$, the additive potential term $U(x_{n-1},x_n)$ in $H$ \eqref{Ha}--\eqref{Hd} vanishes. The above Toda chain amounts in this case to a difference analog of the previously studied $D_n$-type quantum Toda chain with P\"oschl-Teller boundary potential
\cite{ino:finite,kuz-jor-chr:new,osh:completely,ger-leb-obl:quantum}. If we additionally set ${t}_0=-{t}_1=1$, then $w_+(x)=w_-(x)=1$ and we formally recover a $D_n$-type analog of Ruijsenaars' difference Toda chain
 \cite{kuz-tsy:quantum,eti:whittaker,sev:quantum,che:whittaker} that was introduced at the level of classical mechanics by Suris \cite{sur:discrete}.

\section{Deformed hyperoctahedral $q$-Whittaker functions}\label{sec3}
Let  $\Lambda$ denote the cone of integer partitions $\lambda=(\lambda_1,\ldots ,\lambda_n)$  with
decreasingly ordered parts $\lambda_1\geq\cdots\geq\lambda_n\geq 0$, and let
$W$ be the hyperoctahedral group formed by the semi-direct product of the symmetric group $S_n$ and the
$n$-fold product of the cyclic group $\mathbb{Z}_2\cong \{1,-1\}$. 
Elements $w=(\sigma,\epsilon)\in W$ act naturally on $\xi=(\xi_1,\ldots\xi_n)\in \mathbb{R}^n$ via $w\xi:=(\epsilon_1 \xi_{\sigma_1},\ldots ,\epsilon_n \xi_{\sigma_n})$ (with $\sigma\in S_n$ and $\epsilon_j\in \{1,-1\}$ for $j=1,\ldots ,n$).
A standard basis for
the algebra of $W$-invariant trigonometric polynomials on the torus $\mathbb{T}=\mathbb{R}^n/(2\pi\mathbb{Z}^n)$ is given by
the
hyperoctahedral monomial symmetric functions
\begin{equation}
m_\lambda(\xi):=\sum_{\mu\in W\lambda} e^{i\langle \mu ,\xi\rangle},\qquad \lambda\in\Lambda,
\end{equation}
where the summation is meant over the orbit of $\lambda$ with respect to the action of $W$ and the bracket $\langle\cdot ,\cdot \rangle$ refers to the usual inner product on $\mathbb{R}^n$ (so $\langle \mu ,\xi\rangle =\mu_1\xi_1+\cdots +\mu_n\xi_n$). This monomial basis inherits a natural partial order from the
hyperoctahedral dominance ordering of the partitions:
\begin{equation}\label{porder}
\forall \mu,\lambda\in\Lambda:\quad \mu\leq \lambda \ \text{iff}\
\sum_{1\leq j\leq k} \mu_j \leq \sum_{1\leq j\leq k} \lambda_j \quad\text{for}\quad k=1,\ldots ,n.
\end{equation}

By definition, the basis of deformed hyperoctahedral $q$-Whittaker functions
$p_\lambda(\xi)$, $\lambda\in\Lambda$  is given by
the polynomials of the form
\begin{subequations}
\begin{equation}\label{qW1}
p_\lambda(\xi) = m_\lambda(\xi) +\sum_{\substack{\mu\in\Lambda\\\text{with}\, \mu <\lambda}}
c_{\lambda ,\mu}m_\mu(\xi)\qquad (c_{\lambda ,\mu}\in\mathbb{C})
\end{equation}
such that
\begin{equation}\label{qW2}
\langle p_\lambda ,m_\mu\rangle_{\hat{\Delta}} = 0\quad \text{if}\ \mu <\lambda ,
\end{equation}
\end{subequations}
where the  inner product 
\begin{subequations}
\begin{equation}\label{ip}
\langle \hat{f},\hat{g}\rangle_{\hat{\Delta}}:=\int_{\mathbb{A}}\hat{f}(\xi)\overline{\hat{g}(\xi)}\hat{\Delta}(\xi)\text{d}\xi
\qquad (\hat{f},\hat{g}\in L^2(\mathbb{A},\hat{\Delta} (\xi) \text{d}\xi))
\end{equation}
is determined by the weight function
\begin{equation}\label{plancherel}
\hat{\Delta} (\xi ):=\frac{1}{(2\pi)^n}
\prod_{1\leq j<k\leq n}  \left| (e^{i(\xi_j+\xi_k)},e^{i(\xi_j-\xi_k)})_\infty \right|^2
\prod_{1\leq j\leq n}\left|
\frac{(e^{2i\xi_j})_\infty}{\prod_{0\leq r\leq 3} (\hat{t}_r e^{i\xi_j})_\infty}   \right|^2
\end{equation}
\end{subequations}
supported on the hyperoctahedral Weyl alcove
\begin{equation}\label{alcove}
\mathbb{A}:=\{ (\xi_1,\xi_2,\ldots,\xi_n)\in\mathbb{R}^n\mid \pi>\xi_1>\xi_2>\cdots >\xi_n>0\} .
\end{equation}
Here $(x)_m:=\prod_{l =0}^m (1-xq^l)$ and $(x_1,\ldots, x_l)_m:=(x_1)_m\cdots (x_l)_m$ refer to standard notations for the $q$-Pochhammer symbols, and  it  is assumed that
\begin{equation}\label{wp}
q\in (0,1)\quad\text{and}\quad \hat{t}_r\in (-1,1)\setminus \{ 0\} \quad (r=0,\ldots ,3).
\end{equation}
These deformed hyperoctahedral  $q$-Whittaker functions $p_\lambda(\xi)$, $\lambda\in\Lambda$ amount to a $t\to 0$ degeneration of the more general Macdonald-Koorwinder multivariate Askey-Wilson polynomials introduced in Ref. \cite{koo:askey-wilson} (cf. Appendix \ref{appA} below).

\section{Diagonalization}\label{sec4}
It is known that the eigenfunctions of Ruijsenaars' open difference Toda chain consist of $A_{n-1}$-type  $q$-Whittaker functions given by a $t\to 0$ limit of the Macdonald symmetric functions \cite{ger-leb-obl:q-deformed}.
In this section our aim is to show that an analogous result holds for the chain with Askey-Wilson type boundary interactions from Section \ref{sec2}, upon employing the deformed hyperoctahedral $q$-Whittaker functions from Section \ref{sec3}.
To this end it is convenient to reparametrize the boundary parameters of the Toda chain in terms of the $q$-Whittaker deformation parameters \eqref{wp}  via
\begin{equation}\label{tp1}
t_0=\sqrt{q^{-1}\hat{t}_0\hat{t}_1\hat{t}_2\hat{t}_3},\qquad
t_r=\hat{t}_r\hat{t}_0/t_0\quad (r=1,2,3),
\end{equation}
assuming (from now onwards) the additional positivity constraints
\begin{equation}\label{tp2}
\hat{t}_0>0\quad \text{and}\quad \hat{t}_0\hat{t}_1\hat{t}_2\hat{t}_3>0.
\end{equation}

Let $\rho_0+\Lambda:=\{\rho_0+\lambda\mid\lambda\in\Lambda\}$  with $$\rho_0:=(\log_q({t}_0),\ldots,\log_q({t}_0))\in\mathbb{R}^n.$$
We write
$\ell^2(\rho_0+\Lambda,{\Delta})$ for the Hilbert space of lattice functions ${f}:(\rho_0+\Lambda)\to\mathbb{C}$ determined by the inner product
\begin{subequations}
\begin{equation}
\langle {f},{g}\rangle_{{\Delta}}:=
\sum_{\lambda\in\Lambda}{f}(\rho_0+\lambda)\overline{{g}(\rho_0+\lambda)}{\Delta}_\lambda
\qquad ({f},{g}\in \ell^2(\rho_0+\lambda_n,{\Delta})),
\end{equation}
where
\begin{equation}
{\Delta}_\lambda:= \frac{{\Delta}_0}{(q{t}_0^2)_{\lambda_{n-1}+\lambda_n}}
\Bigl(\frac{1-{t}_0^2q^{2\lambda_n}}{1-{t}_0^2}\Bigr)
\prod_{0\leq r\leq 3} \frac{({t}_0{t}_r)_{\lambda_n}}{(q{t}_0{t}_r^{-1})_{\lambda_n}}
\prod_{1\leq j<n}\frac{1}{(q)_{\lambda_j-\lambda_{j+1}}} 
\end{equation}
and
\begin{equation}
{\Delta}_0:=(q)_\infty \prod_{0\leq r<s\leq 3} (\hat{t}_r\hat{t}_s)_\infty 
=(q)_\infty \prod_{1\leq r\leq 3} (t_0t_r,qt_0t_r^{-1})_\infty .
\end{equation}
\end{subequations}
From the limiting behavior for $t\to 0$ of the orthogonality relations satisfied by the normalized
Macdonald-Koornwinder polynomials \eqref{mk-ortho-a}--\eqref{mk-ortho-d}, it is immediate that the wave function
\begin{equation}\label{dqwf}
\psi_\xi(\rho_0+\lambda):=\frac{({t}_0^2)_{2\lambda_n}}{\prod_{0\leq r\leq 3}( {t}_0{t}_r)_{\lambda_n}}p_\lambda(\xi)\qquad (\lambda\in\Lambda,\, \xi\in\mathbb{A})
\end{equation}
satisfies the following orthogonality with respect to the spectral variable $\xi$:
\begin{equation}\label{qw-orthogonality}
\int_{\mathbb{A}} \psi (\rho_0+\lambda)\overline{ \psi (\rho_0+\mu)}\hat{\Delta}(\xi)\text{d}\xi =
\begin{cases}
{\Delta}_\lambda^{-1} &\text{if}\ \lambda =\mu ,\\
0&\text{otherwise}.
\end{cases}
\end{equation}
In other words, the corresponding Fourier transform
 $\boldsymbol{F}: \ell^2(\rho_0+\Lambda,{\Delta})\to L^2(\mathbb{A},\hat{\Delta}\text{d}\xi)$
given by
 \begin{subequations}
\begin{equation}\label{ft1}
(\boldsymbol{F}{f})(\xi):= \langle {f},\psi_\xi \rangle_{{\Delta}}=\sum_{\lambda\in\Lambda}f(\rho_0+\lambda)
\overline{\psi_\xi (\rho_0+\lambda)}{\Delta}_\lambda
\end{equation}
(${f}\in \ell^2(\rho_0+\Lambda,{\Delta})$) constitutes a Hilbert space isomorphism
with an inversion formula of the form
\begin{equation}\label{ft2}
(\boldsymbol{F}^{-1}\hat{f})(\rho_0+\lambda) = \langle \hat{f},\overline{\psi ( \rho_0+\lambda)}\rangle_{\hat{\Delta}}=
\int_{\mathbb{A}} \hat{f} (\xi) \psi_\xi (\rho_0+\lambda)\hat{\Delta}(\xi)\text{d}\xi
\end{equation}
\end{subequations}
($\hat{f}\in L^2(\mathbb{A},\hat{\Delta}\text{d}\xi)$). We will refer to $\boldsymbol{F}$ \eqref{ft1}, \eqref{ft2} as the deformed hyperoctahedral $q$-Whittaker transform.

The formal Hamiltonian $H$ \eqref{Ha}--\eqref{Hd} restricts to a well-defined discrete difference operator in the space of complex functions on the lattice $\rho_0+\Lambda$. Indeed, when $t_0\not\in \{ 1 , q^{1/2}\}$ it is manifest that for $x=(x_1,\ldots ,x_n)$ at these lattice points we stay away from the poles in the coefficients of $H$ stemming from the denominators of  $w_\pm(x_n)$ and $U(x_{n-1},x_n)$ and, moreover, that
for any $f:\mathbb{R}^n\to\mathbb{C}$ and any $\lambda\in\Lambda$ the value of $(Hf)(\rho_0+\lambda)$ 
depends only on evaluations
of $f$ at points of $\rho_0+\Lambda$ (due to the vanishing of
$(1-q^{\lambda_j-\lambda_{j+1}})$ at $\lambda_j=\lambda_{j+1}$ ($1\leq j<n$) and the vanishing of $w_-(\log_q(t_0)+\lambda_n)$ at $\lambda_n=0$):
\begin{eqnarray}\label{Haction}
\lefteqn{(Hf)(\rho_0+\lambda)=}&& \\
&& \sum_{\substack{1\leq j\leq n\\ \lambda +e_j\in\Lambda}} v_j^+(\lambda) f (\rho_0+\lambda +e_j)+ \sum_{\substack{1\leq j\leq n\\ \lambda -e_j\in\Lambda}}v_j^-(\lambda)f (\rho_0+\lambda -e_j) 
+u(\lambda )f (\rho_0+\lambda) ,\nonumber
\end{eqnarray}
where
\begin{align*}
v_j^+(\lambda) =&
(1-q^{\lambda_{j-1}-\lambda_j})\left(\frac{ \prod_{0\leq r \leq 3} (1-{t}_rt_0 q^{\lambda_n})}{(1-t_0^2q^{2\lambda_n})(1-t_0^2q^{2\lambda_n+ 1})}\right)^{\delta_{n-j}} , \\
v_j^-(\lambda) =&
(1-q^{\lambda_{j}-\lambda_{j+1}})(1-t_0^2 q^{\lambda_{n-1}+\lambda_{n}})^{\delta_{n-j}+\delta_{n-1-j}}\\
&\times
\left(\frac{ \prod_{0\leq r \leq 3} (1-{t}_r^{-1}t_0 q^{\lambda_n})}{(1-t_0^2q^{2\lambda_n})(1-t_0^2q^{2\lambda_n- 1})}\right)^{\delta_{n-j}} ,\\
u(\lambda )=&  \sum_{\epsilon \in \{ 1,-1\}}  \frac{c_\epsilon(1-\epsilon t_0 q^{\lambda_{n-1}+1/2})}{(1-\epsilon t_0q^{\lambda_n-1/2})(1-\epsilon t_0^{-1} q^{-\lambda_n-1/2})} ,
\end{align*}
with  $c_\epsilon$ taken from \eqref{Hd}. 
Here $\delta_k:=1 $ if $k=0$ and $\delta_k:=0$ otherwise,
the vectors $e_1,\ldots ,e_n$ denote the standard unit basis of $\mathbb{R}^n$, and
 $\lambda_{0}:=+\infty$, $\lambda_{n+1}:=-\infty$ by convention (so $(1-q^{\lambda_0-\lambda_1})= (1-q^{\lambda_n-\lambda_{n+1}})\equiv 1$).
The action of $H$ on lattice functions  in Eq. \eqref{Haction}  extends continuously from  $t_0\not\in \{ 1 , q^{1/2}\}$ to the full parameter domain determined by Eqs. \eqref{tp1}, \eqref{tp2} and \eqref{wp}.

Our main result implements the Hamiltonian under consideration as a self-adjoint operator in the Hilbert space
$ \ell^2(\rho_0+\Lambda,{\Delta})$ and provides its spectral decomposition with the aid of the
deformed hyperoctahedral $q$-Whittaker transform.

 \begin{theorem}[Diagonalization]\label{diagonal:thm}
(i). For boundary parameters $t_r$  \eqref{tp1} determined by the $q$-Whittaker deformation parameters $\hat{t}_r$ \eqref{wp},  \eqref{tp2}, the action of the
 difference Toda Hamiltonian $H$ \eqref{Ha}--\eqref{Hd}  given by Eq. \eqref{Haction} 
 constitutes a bounded self-adjoint operator in the Hilbert space $\ell^2(\rho_0+\Lambda,\Delta)$  
 with
purely absolutely continuous spectrum.
(ii). The operator in question is diagonalized by
 the deformed hyperoctahedral $q$-Whittaker transform $\boldsymbol{F}$ \eqref{ft1}, \eqref{ft2}:
 \begin{subequations}
 \begin{equation}\label{Hdiagonal}
H=\boldsymbol{F}^{-1}  \circ \hat{{E}} \circ\boldsymbol{F},
 \end{equation}
 where $\hat{{E}}$ denotes the bounded real multiplication operator acting on
$\hat{f}\in L^2(\mathbb{A},\hat{\Delta}\text{d}\xi)$ via
\begin{equation}\label{E}
(\hat{{E}}\hat{f})(\xi):=\hat{E}(\xi) \hat{f}(\xi)\quad\text{with}\quad \hat{E}(\xi):=2\sum_{1\leq j\leq n}\cos(\xi_j) .
\end{equation}
\end{subequations}
\end{theorem}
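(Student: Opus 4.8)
The whole statement hinges on a single generalized-eigenfunction identity for the wave function \eqref{dqwf}, namely
\[
(H\psi_\xi)(\rho_0+\lambda)=\hat{E}(\xi)\,\psi_\xi(\rho_0+\lambda)\qquad (\lambda\in\Lambda,\ \xi\in\mathbb{A}),
\]
where $H$ is applied to $\psi_\xi$ regarded as a lattice function of $\lambda$ via the explicit difference formula \eqref{Haction}. The plan is to derive this identity from the Pieri formula for the Macdonald--Koornwinder polynomials \cite{die:properties,mac:affine} (cf.\ Appendix~\ref{appA}): multiplication of a Macdonald--Koornwinder polynomial by the degree-one monomial $m_{(1,0,\ldots,0)}(\xi)=2\sum_{1\le j\le n}\cos(\xi_j)=\hat{E}(\xi)$ expands into a finite sum running over the partitions $\mu=\lambda\pm e_j$ that remain in $\Lambda$, with explicit coefficients. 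First I would pass to the $t\to 0$ degeneration, which turns the left-hand side into $\hat{E}(\xi)\,p_\lambda(\xi)$ and the right-hand side into a recurrence for the $q$-Whittaker functions involving the shifted partitions $\lambda\pm e_j$ together with a diagonal term; after absorbing the normalization factor $(t_0^2)_{2\lambda_n}/\prod_{0\le r\le 3}(t_0t_r)_{\lambda_n}$ of \eqref{dqwf}, the resulting coefficients should coincide termwise with $v_j^+(\lambda)$, $v_j^-(\lambda)$ and $u(\lambda)$. The genuine work — and the step I expect to be the main obstacle — is precisely this matching of coefficients in the limit, including the boundary bookkeeping that truncates the recurrence when $\lambda\pm e_j$ leaves $\Lambda$ (governed by the vanishing of $(1-q^{\lambda_j-\lambda_{j+1}})$ at $\lambda_j=\lambda_{j+1}$ and of the weight $w_-$ at $\lambda_n=0$), which has to be reconciled with the selection rules of the degenerate Pieri formula.

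Granting the eigenvalue equation, part (ii) follows by intertwining through the inversion formula \eqref{ft2}. Fix $\hat{f}\in L^2(\mathbb{A},\hat{\Delta}\,\text{d}\xi)$ and set $f:=\boldsymbol{F}^{-1}\hat{f}\in\ell^2(\rho_0+\Lambda,\Delta)$. By \eqref{Haction} the value $(Hf)(\rho_0+\lambda)$ is a finite linear combination of the point values of $f$ at $\rho_0+\lambda$ and $\rho_0+\lambda\pm e_j$, each of which is the convergent integral \eqref{ft2}; pulling this finite combination inside the integral and invoking the eigenvalue equation gives
\begin{align*}
(Hf)(\rho_0+\lambda)
&=\int_{\mathbb{A}}\hat{f}(\xi)\,(H\psi_\xi)(\rho_0+\lambda)\,\hat{\Delta}(\xi)\,\text{d}\xi\\
&=\int_{\mathbb{A}}\hat{E}(\xi)\hat{f}(\xi)\,\psi_\xi(\rho_0+\lambda)\,\hat{\Delta}(\xi)\,\text{d}\xi=(\boldsymbol{F}^{-1}\hat{E}\hat{f})(\rho_0+\lambda).
\end{align*}
Since $\boldsymbol{F}^{-1}$ maps $L^2(\mathbb{A},\hat{\Delta}\,\text{d}\xi)$ onto $\ell^2(\rho_0+\Lambda,\Delta)$, this establishes $H=\boldsymbol{F}^{-1}\circ\hat{E}\circ\boldsymbol{F}$ on the whole Hilbert space, i.e.\ \eqref{Hdiagonal}.

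Part (i) is then read off from the multiplication operator $\hat{E}$ by transporting its properties through the Hilbert-space isomorphism $\boldsymbol{F}$. Boundedness is immediate from $\|\hat{E}\|_\infty=\sup_{\xi\in\mathbb{A}}|2\sum_{1\le j\le n}\cos(\xi_j)|\le 2n$, whence $\|H\|\le 2n$; self-adjointness follows because $\hat{E}(\xi)$ is real-valued and conjugation by the unitary $\boldsymbol{F}$ preserves self-adjointness, so that $H^*=\boldsymbol{F}^{-1}\hat{E}^*\boldsymbol{F}=H$. For the purely absolutely continuous spectrum I would use that $H$ is unitarily equivalent to multiplication by $\hat{E}$ and that $\nabla\hat{E}(\xi)=-2(\sin\xi_1,\ldots,\sin\xi_n)$ is nonvanishing throughout the open alcove $\mathbb{A}$, since there $0<\xi_j<\pi$ forces $\sin(\xi_j)>0$. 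Thus $\hat{E}$ is a submersion on $\mathbb{A}$, so by the coarea formula the pushforward under $\hat{E}$ of any absolutely continuous finite measure $|\psi|^2\hat{\Delta}\,\text{d}\xi$ (with $\psi\in L^2(\mathbb{A},\hat{\Delta}\,\text{d}\xi)$) is absolutely continuous with respect to Lebesgue measure on $\mathbb{R}$; hence every spectral measure of $H$ is absolutely continuous and the spectrum, which coincides with the essential range $[-2n,2n]$ of $\hat{E}$, is purely absolutely continuous.
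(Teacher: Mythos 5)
Your overall strategy coincides with the paper's: reduce everything to the generalized eigenvalue identity $H\psi_\xi=\hat{E}(\xi)\psi_\xi$, obtain that identity as the $t\to 0$ degeneration of the Macdonald--Koornwinder Pieri formula \eqref{pieri}, and then transport the properties of the multiplication operator $\hat{E}$ through the unitary $\boldsymbol{F}$. The soft parts of your argument are fine and in fact more detailed than the paper, which simply declares part (i) ``immediate'' from part (ii); your intertwining computation for \eqref{Hdiagonal} and your submersion/coarea argument for absolute continuity of the spectrum (using $\sin\xi_j>0$ on $\mathbb{A}$) are both correct.

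However, there is a genuine gap exactly where you flag ``the main obstacle'': the matching of the diagonal coefficient. Rearranging \eqref{pieri} puts the coefficient of $\mathbf{P}_\lambda$ in the form $\sum_{j}(\hat{\tau}_j+\hat{\tau}_j^{-1})-\sum_{j}V_j^+(\lambda)-\sum_{j}V_j^-(\lambda)$ with $\hat{\tau}_j=t^{n-j}\hat{t}_0$, and for $j<n$ the terms $\hat{\tau}_j^{-1}$ and $V_j^-(\lambda)$ \emph{individually diverge} as $t\to 0$; only their combination has a finite limit, so a naive termwise passage to the limit fails for the diagonal term (in contrast to the off-diagonal coefficients, where $\lim_{t\to 0}\hat{\tau}_jV_j^+(\lambda)=v_j^+(\lambda)$ and $\lim_{t\to 0}\hat{\tau}_j^{-1}V_j^-(\lambda)=v_j^-(\lambda)$ really are routine). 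The paper handles this by first establishing a separate rational identity in $q^{x_1},\ldots,q^{x_n}$ equating the divergent combination to a closed-form expression proportional to $\sum_{\epsilon\in\{1,-1\}}\prod_r(1-\epsilon t_r q^{-1/2})\bigl(1-\prod_j\cdots\bigr)$; that identity is proved by exploiting the hyperoctahedral symmetry, matching residues at the simple poles on both sides, and showing the resulting $W$-invariant Laurent polynomial discrepancy vanishes via its asymptotics along $x_j=(n+1-j)c$, $c\to+\infty$. Only after substituting $q^{x_j}=\tau_jq^{\lambda_j}$ and letting $t\to 0$ does one obtain $u(\lambda)$ with the Askey--Wilson potential $c_\epsilon$ of \eqref{Hd}. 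Without this (or an equivalent cancellation argument) the eigenvalue equation, and hence the whole theorem, is not established.
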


\begin{proof}
The first part  of the theorem is immediate form the second part. To prove the second part
it suffices to verify that the deformed hyperoctahedral $q$-Whittaker kernel $\psi_\xi$ satisfies
the eigenvalue equation
$H\psi_\xi=\hat{E}(\xi)\psi_\xi$, or more explicitly that:
\begin{align*}
\sum_{\substack{1\leq j\leq n\\ \lambda +e_j\in\Lambda}} v_j^+(\lambda) \psi_\xi (\rho_0+\lambda +e_j)+ \sum_{\substack{1\leq j\leq n\\ \lambda -e_j\in\Lambda}}v_j^-(\lambda) \psi_\xi (\rho_0+\lambda -e_j) & \\
+u(\lambda )\psi_\xi (\rho_0+\lambda) 
=\hat{E}(\xi)\psi_\xi (\rho_0+\lambda) .& \nonumber
\end{align*}
This eigenvalue equation follows from the Pieri formula for the Macdonald-Koornwinder polynomials \eqref{pieri} in the limit $t\to 0$. Indeed, it is clear that in the Pieri formula
$\lim_{t\to 0} \mathbf{P}_\lambda (\xi)=\psi_\lambda (\rho_0+\lambda )$,
$\lim_{t\to 0} \hat{\tau}_jV^+_j(\lambda)=v_j^+(\lambda)$, $\lim_{t\to 0} \hat{\tau}_j^{-1}V^-_j(\lambda)=v_j^-(\lambda)$, and one also has that
\begin{equation*}\label{limit}
 \lim_{t\to 0} \Bigl( \sum_{j=1}^n (\hat{\tau}_j+\hat{\tau}_j^{-1})-
\sum_{\substack{1\leq j\leq n\\ \lambda +e_j\in\Lambda}}
V_j^+(\lambda) 
-\sum_{\substack{1\leq j\leq n\\ \lambda -e_j\in\Lambda}}
V_j^-(\lambda)  \Bigr) = u(\lambda) .
\end{equation*}
This last limit formula is not evident
but can be deduced from the following rational identity in $q^{x_1},\ldots, q^{x_n}$:
\begin{align*}
&\sum_{j=1}^n\Bigl( \hat{\tau}_j^{-1}-  \hat{\tau}_1^{-1}w_+(x_j) \prod_{\substack{1\leq k\leq n\\k\neq j}}
\frac{1-tq^{x_j+x_k}}{1-q^{x_j+x_k}}
\frac{1-tq^{x_j-x_k}}{1-q^{x_j-x_k}} \Bigr) +\\
&\sum_{j=1}^n  \Bigl( \hat{\tau}_j- \hat{\tau}_1w_-(x_j) 
\prod_{\substack{1\leq k\leq n\\k\neq j}}
\frac{1-t^{-1}q^{x_j+x_k}}{1-q^{x_j+x_k}}
\frac{1-t^{-1}q^{x_j-x_k}}{1-q^{x_j-x_k}} \Bigr) \nonumber \\
&=C_t\sum_{\epsilon \in \{ 1,-1\}}  \prod_{0\leq r\leq 3}(1-\epsilon t_rq^{-1/2}) \Bigl( 1-\prod_{j=1}^n
\frac{1-\epsilon t q^{x_j-1/2}}{1-\epsilon q^{x_j-1/2}}
\frac{1-\epsilon t^{-1} q^{x_j+1/2}}{1-\epsilon q^{x_j+1/2}} \Bigr) , \nonumber
\end{align*}
where $C_t=-\frac{1}{2}t \hat{t}_0^{-1}(1-t)^{-1}(1-q^{-1}t)^{-1}$, upon replacing
$q^{x_j}$ by $\tau_jq^{\lambda_j}$ ($j=1,\ldots ,n$) and
performing the limit $t\to 0$. To infer the rational identity itself, one exploits
the hyperoctahedral symmetry in the variables $x_1,\ldots x_n$ and checks that---as a function of $x_j$ (with the remaining variables fixed in a generic configuration)---the residues at the (simple) poles on both sides coincide. Hence, the difference of both rational expressions amounts to a $W$-invariant Laurent polynomial in $q^{x_1},\ldots ,q^{x_n}$.
The Laurent polynomial in question must actually vanish, as the rational expressions under consideration tend to $0$
for $x_j=(n+1-j)c$ in the limit $c\to +\infty$.
\end{proof}

\section{Integrability}\label{sec5}
The quantum integrability of the difference Toda Hamiltonian $H$ \eqref{Ha}--\eqref{Hd}
is an immediate consequence of
the diagonalization in Theorem \ref{diagonal:thm}.
In effect, a complete system of commuting
 quantum integrals in the Hilbert space $ \ell^2(\rho_0+\Lambda,{\Delta})$
is given by the bounded self-adjoint operators
\begin{equation}\label{integrals}
H_l :=\boldsymbol{F}^{-1}  \circ \hat{E}_l \circ\boldsymbol{F},\qquad l=1,\ldots, n,
 \end{equation}
where $\hat{E}_l : L^2(\mathbb{A},\hat{\Delta}\text{d}\xi)\to L^2(\mathbb{A},\hat{\Delta}\text{d}\xi)$ 
denotes the real multiplication operator by $\hat{E}_l(\xi):=m_{\omega_l}(\xi)$ with $\omega_l:=e_1+\cdots +e_l$ (so $H_1=H$). The operator $H_l$ \eqref{integrals} acts on $f\in \ell^2(\rho_0+\Lambda,{\Delta})$ as a difference operator of the form
 \begin{subequations}
 \begin{equation}
( H_l f)(\rho_0+\lambda)= \sum_{\substack{J\subset\{1,\ldots ,n\}  , 0\leq |J|\leq l \\ \epsilon_j\in \{ 1, -1\}, j\in J ; \lambda + e_{\epsilon J}\in\Lambda   }     } C^{(l )}_{\epsilon J} (\lambda) f(\rho_0+\lambda +e_{\epsilon J}),
  \end{equation}
 where $e_{\epsilon J}:=\sum_{j\in J} \epsilon_j e_j$, $|J|$ denotes the cardinality of $J\subset \{ 1,\ldots ,n\}$,
 and the coefficients
 \begin{equation}
 C^{(l)}_{\epsilon J} (\lambda)=\lim_{t\to 0} C^{(l)}_{\epsilon J,t} (\lambda) 
 \end{equation}
 arise as $t\to 0$ limits of the expansion coefficients in the corresponding Pieri formula for the normalized Macdonald-Koornwinder polynomials $\mathbf{P}_\lambda (\xi)$ \eqref{n-pol1}, \eqref{n-pol2} (cf.
 \cite[\text{Sec.}~6]{die:properties}):
 \begin{equation}
 \hat{E}_l(\xi)\mathbf{P}_\lambda(\xi)=\sum_{\substack{J\subset\{1,\ldots ,n\}  , 0\leq |J|\leq l \\ \epsilon_j\in \{ 1, -1\}, j\in J ; \lambda + e_{\epsilon J}\in\Lambda   }     }  C^{(l)}_{\epsilon J,t} (\lambda) \mathbf{P}_{\lambda+e_{\epsilon J}}(\xi) .
 \end{equation}
 \end{subequations}
  Notice in this connection that the Pieri expansion coefficients
  $$C^{(l)}_{\epsilon J,t} (\lambda)=\boldsymbol{\Delta}_{\lambda+e_{\epsilon J}}\int_{\mathbb{A}}  \hat{E}_l(\xi)\mathbf{P}_\lambda(\xi) \overline{ \mathbf{P}_{\lambda+e_{\epsilon J}}(\xi)}\boldsymbol{\hat{\Delta}}(\xi)\text{d}\xi $$
 are continuous at $t=0$, because the Macdonald-Koornwinder  weight function $\boldsymbol{\hat{\Delta}}(\xi)$ and (thus) the polynomials $\mathbf{P}_\lambda (\xi)$, $\lambda\in\Lambda$ are continuous at this parameter value
  (cf. Appendix \ref{appA}). 
 
 In practice it turns out to be very tedious to compute the $t\to 0$ limiting coefficients
 $C^{(l)}_{\epsilon J} (\lambda) $  explicitly  with  the aid of the known explicit Pieri formulas for the
 Macdonald-Koornwinder polynomials  in \cite[\text{Sec.}~6]{die:properties} beyond $l =1$. 
For a particular second quantum integral belonging to the commutative algebra generated by $H_1,\ldots ,H_n$, however,  the required computation results to be surprisingly straightforward.
More specifically:  from the $t\to 0$ limiting behavior of the $r=n$ (top)
Pieri formula for the Macdonald-Koornwinder polynomials in Theorem 6.1 of \cite{die:properties}, one readily deduces that the action on $f\in\ell^2(\rho_0+\Lambda,\Delta)$ of the operator $H_Q:= \boldsymbol{F}^{-1}  \circ \hat{Q} \circ\boldsymbol{F}$, where $\hat{Q}$
 refers to the self-adjoint multiplication operator in
$L^2(\mathbb{A},\hat{\Delta}\text{d}\xi)$ by $$\hat{Q}(\xi):=\prod_{j=1}^n (2\cos (\xi_j)-\hat{t}_0-\hat{t}_0^{-1}),$$ is given explicitly by 
\begin{eqnarray}\label{HQ}
\lefteqn{ (H_Qf)(\rho_0+\lambda ) = } && \\
 && \sum_{\substack{J_+\cup J_-\cup K_+\cup K_-=\{1,\ldots,n\}   \\ |J_+|+| J_- |+ | K_+ |+| K_- |=n \\ \lambda + e_{J_+}-e_{J_-}\in\Lambda }     } 
 u_{K_+,K_-}(\lambda) v_{J_+,J_-}(\lambda ) f(\rho_0+\lambda+e_{J_+}-e_{J_-}) , \nonumber
 \end{eqnarray}
 with
 \begin{align*}
&  v_{J_+,J_-}(\lambda )  =
 \prod_{\substack{j\in J_+ \\ j-1\not\in J_+}} (1-q^{\lambda_{j-1}-\lambda_j})
  \prod_{\substack{j\in J_- \\ j+1\not\in J_-}} (1-q^{\lambda_{j}-\lambda_{j+1}-\delta_{J_+}(j+1)})  \\
 & \times (1-t_0^2q^{\lambda_{n-1}+\lambda_n})^{\delta_{J_+^c}(n-1)\delta_{J_+^c}(n)-\delta_{J_+^c\cap J_-^c}(n-1)\delta_{J_+^c\cap J_-^c}(n)} \\
&\times (1-t_0^2q^{\lambda_{n-1}+\lambda_n-1})^{\delta_{J_-}(n-1)\delta_{J_-}(n)}  \
w_+(\lambda_n)^{\delta_{J_+}(n)}w_-(\lambda_n)^{\delta_{J_-}(n)}
 \end{align*}
 and
 \begin{align*}
u_{K_+,K_-}(\lambda)  &=(-\hat{t}_0)^{|K_-|-|K_+|}  \prod_{\substack{k\in K_+\\ k-1\in K_-}} (1-q^{\lambda_{k-1}-\lambda_{k}}) 
  \prod_{\substack{k\in K_+ \\ k+1\in K_-}} (1-q^{\lambda_{k}-\lambda_{k+1}+1})\\
  &\times (1-t_0^2q^{\lambda_{n-1}+\lambda_n+1})^{\delta_{K_+}(n-1)\delta_{K_+}(n)}
  (1-t_0^2q^{\lambda_{n-1}+\lambda_n})^{\delta_{K_-}(n-1)\delta_{K_-}(n)} \\
  &\times w_+(\lambda_n)^{\delta_{K_+}(n)}w_-(\lambda_n)^{\delta_{K_-}(n)} .
  \end{align*}
Here $\delta_J:\{ 1,\ldots ,n\} \to \{ 0, 1\}$ denotes the characteristic function of $J\subset \{1,\ldots ,n\}$ and $J^c=\{ 1,\ldots ,n\}\setminus J$.
 
 \begin{corollary}
 The difference Toda Hamiltonians $H$ \eqref{Haction} and $H_Q$ \eqref{HQ}  are bounded, self-adjoint,
 commuting operators in $ \ell^2(\rho_0+\Lambda,{\Delta})$ for which the deformed hyperoctahedral $q$-Whittaker functions $\psi_\xi$ \eqref{dqwf} constitute a complete system of (generalized) joint eigenfunctions corresponding to the eigenvalues $\hat{E}(\xi)$ and $\hat{Q}(\xi)$, respectively.
 \end{corollary}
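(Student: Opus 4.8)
The plan is to obtain the corollary as a formal consequence of the diagonalization in Theorem~\ref{diagonal:thm} together with the explicit realization of $H_Q$ recorded in Eq.~\eqref{HQ}. The decisive structural fact is that both Hamiltonians are unitary conjugates, through the deformed hyperoctahedral $q$-Whittaker transform $\boldsymbol{F}$, of \emph{multiplication} operators on $L^2(\mathbb{A},\hat{\Delta}\,\text{d}\xi)$: the theorem gives $H=\boldsymbol{F}^{-1}\circ\hat{E}\circ\boldsymbol{F}$, while by construction $H_Q=\boldsymbol{F}^{-1}\circ\hat{Q}\circ\boldsymbol{F}$, the identification of the latter with the difference operator \eqref{HQ} being furnished by the $t\to0$ degeneration of the top Macdonald--Koornwinder Pieri formula. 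Since $\boldsymbol{F}$ is a Hilbert space isomorphism, each spectral attribute of the pair $\hat{E},\hat{Q}$ is transported verbatim to the pair $H,H_Q$, so it suffices to read those attributes off from the two symbols.

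First I would settle boundedness, self-adjointness, and commutativity. The symbols $\hat{E}(\xi)=2\sum_{1\leq j\leq n}\cos(\xi_j)$ and $\hat{Q}(\xi)=\prod_{j=1}^{n}(2\cos(\xi_j)-\hat{t}_0-\hat{t}_0^{-1})$ are real-valued and continuous on the closure of the bounded alcove $\mathbb{A}$ \eqref{alcove}, hence bounded; multiplication by such a function is a bounded self-adjoint operator on $L^2(\mathbb{A},\hat{\Delta}\,\text{d}\xi)$, and both properties survive conjugation by the unitary $\boldsymbol{F}$. Commutativity is automatic, because any two multiplication operators commute: $\hat{E}\hat{Q}=\hat{Q}\hat{E}$ forces $HH_Q=\boldsymbol{F}^{-1}\hat{E}\hat{Q}\boldsymbol{F}=\boldsymbol{F}^{-1}\hat{Q}\hat{E}\boldsymbol{F}=H_QH$.

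It then remains to record the joint eigenfunction property and the completeness. The relation $H\psi_\xi=\hat{E}(\xi)\psi_\xi$ for each fixed $\xi\in\mathbb{A}$ is exactly what the proof of Theorem~\ref{diagonal:thm} verifies, and the companion relation $H_Q\psi_\xi=\hat{Q}(\xi)\psi_\xi$ follows by the same mechanism: substituting the integral representation \eqref{ft2} into $\boldsymbol{F}^{-1}\hat{Q}\boldsymbol{F}$ and passing the difference operator \eqref{HQ} (which acts in the lattice variable) under the integral, one reads off that $\psi_\xi$ is the generalized eigenfunction of $H_Q$ with eigenvalue $\hat{Q}(\xi)$; equivalently, this is the $t\to0$ degeneration of the top Pieri formula for the normalized Macdonald--Koornwinder polynomials $\mathbf{P}_\lambda(\xi)$ that underlies \eqref{HQ}. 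Finally, completeness of $\{\psi_\xi\}_{\xi\in\mathbb{A}}$ as a (generalized) joint eigenbasis is nothing other than the assertion that $\boldsymbol{F}$ is a unitary isomorphism simultaneously intertwining $H$ and $H_Q$ with $\hat{E}$ and $\hat{Q}$, which is the content of the orthogonality \eqref{qw-orthogonality} and the inversion formula \eqref{ft2}; together with part~(i) of Theorem~\ref{diagonal:thm} this also accounts for the purely absolutely continuous nature of the joint spectrum. The only step that is not entirely automatic is the $H_Q$-eigenvalue equation, that is, confirming that the explicit coefficients in \eqref{HQ} genuinely arise as the advertised $t\to0$ limit of the top Pieri coefficients; once that identification is in hand the corollary reduces to the functional calculus for a pair of commuting bounded real multiplication operators.
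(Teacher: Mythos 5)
Your proposal is correct and follows essentially the same route as the paper, which states the corollary as an immediate consequence of defining $H_Q=\boldsymbol{F}^{-1}\circ\hat{Q}\circ\boldsymbol{F}$ (with the explicit form \eqref{HQ} read off from the $t\to0$ limit of the top $r=n$ Pieri formula of \cite{die:properties}) and of Theorem \ref{diagonal:thm}: all spectral properties are transported from the commuting real multiplication operators $\hat{E}$, $\hat{Q}$ through the unitary transform $\boldsymbol{F}$. You also correctly isolate the one substantive verification, namely that the coefficients in \eqref{HQ} are the $t\to0$ limits of the top Pieri coefficients, which is exactly the computation the paper invokes.
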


\section{Bispectral dual system}\label{sec6}
For $t\to 0$ the Macdonald-Koornwinder $q$-difference equation \eqref{qde}
amounts to the following eigenvalue equation satisfied by the deformed hyperoctahedral $q$-Whittaker functions:
\begin{equation}
\hat{H}p_\lambda = (q^{-\lambda_1}-1) p_\lambda  \qquad (\lambda\in\Lambda),
\end{equation}
with
\begin{subequations}
\begin{equation}\label{hd1}
\hat{H}=\sum_{j=1}^n \left( \hat{v}_j(\xi)(\hat{T}_{j,q}-1)+ \hat{v}_{j}(-\xi)\hat{T}_{j,q}^{-1}-1)\right) ,
\end{equation}
and
\begin{equation}\label{hd2}
\hat{v}_{ j}(\xi)=
\frac{\prod_{0\leq r\leq 3} (1-\hat{t}_re^{i\xi_j}) }{(1-e^{2i\xi_j})  (1-q e^{2i\xi_j}) }
\prod_{\substack{1\leq k\leq n\\k\neq j}} (1-e^{i(\xi_j+\xi_k)})^{-1}(1-e^{i(\xi_j-\xi_k)})^{-1}, 
\end{equation}
\end{subequations}
where $\hat{T}_{j,q}$ acts on trigonometric (Laurent) polynomials $\hat{p}(e^{i\xi_1},\ldots ,e^{i\xi_n})$  by a $q$-shift of the $j$th variable:
\begin{equation*}
(\hat{T}_{j,q}\hat{p})(e^{i\xi_1},\ldots ,e^{i\xi_n}):=\hat{p}(e^{i\xi_1},\ldots,e^{i\xi_{j-1}},qe^{i\xi_j},e^{i\xi_{j+1}},\ldots,e^{i\xi_n}) .
\end{equation*}
The following proposition is now immediate.

\begin{proposition}[Bispectral Dual Hamiltonian]\label{dual:prp}
The $t=0$ Macdonald-Koornwinder $q$-difference operator $\hat{H}$  \eqref{hd1},\eqref{hd2} constitutes a  nonnegative unbounded self-adjoint operator with purely discrete spectrum in
$L^2(\mathbb{A},\hat{\Delta}\text{d}\xi)$ that is
diagonalized by the (inverse) deformed hyperoctahedral $q$-Whittaker transform $\boldsymbol{F}$ \eqref{ft1}, \eqref{ft2}:
\begin{subequations}
 \begin{equation}\label{Hddiagonal}
\hat{H}=\boldsymbol{F}  \circ {{E}} \circ\boldsymbol{F}^{-1},
 \end{equation}
where $E$ denotes the self-adjoint multiplication operator in $ \ell^2(\rho_0+\Lambda,\Delta)$ of the form
\begin{equation}\label{Ed}
(E f)(\rho_0+\lambda):=(q^{-\lambda_1}-1) f(\rho_0+\lambda) \qquad (\lambda\in\Lambda)
\end{equation}
\end{subequations}
 (for $f \in \ell^2(\rho_0+\Lambda,\Delta)$ with $\langle Ef,Ef\rangle_{\Delta}<\infty$).
\end{proposition}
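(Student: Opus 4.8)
The plan is to deduce the proposition directly from the eigenvalue equation $\hat{H}p_\lambda=(q^{-\lambda_1}-1)p_\lambda$ (the $t\to 0$ degeneration of \eqref{qde}) by transporting all structure through the transform $\boldsymbol{F}$: one defines the self-adjoint realization of $\hat{H}$ to be $\boldsymbol{F}\circ E\circ\boldsymbol{F}^{-1}$ and then inherits nonnegativity, discreteness, and unboundedness from the explicit multiplication operator $E$.

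First I would record two elementary facts about the kernel. Because each orbit $W\mu$ is invariant under $\nu\mapsto-\nu$, the monomials $m_\mu(\xi)$ are real for $\xi\in\mathbb{R}^n$, and the triangular relations \eqref{qW1},\eqref{qW2} with the real weight $\hat{\Delta}$ then force real expansion coefficients $c_{\lambda,\mu}$; hence $p_\lambda$, and therefore $\psi_\xi(\rho_0+\lambda)$, are real-valued on $\mathbb{A}$, so complex conjugation acts trivially on the kernel. Since $\psi_\xi(\rho_0+\lambda)$ is a $\xi$-independent scalar multiple of $p_\lambda(\xi)$, the eigenvalue equation lifts verbatim to $\hat{H}\psi_\xi(\rho_0+\lambda)=(q^{-\lambda_1}-1)\psi_\xi(\rho_0+\lambda)$, the $q$-shift acting in the spectral variable $\xi$.

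Next I would establish the intertwining relation on the core $\mathcal{P}$ of $W$-invariant trigonometric polynomials, i.e. the finite span of $\{p_\lambda\}_{\lambda\in\Lambda}$, which is dense in $L^2(\mathbb{A},\hat{\Delta}\text{d}\xi)$ and which $\hat{H}$ preserves (the eigenvalue equation shows $\hat{H}p_\lambda$ is again polynomial, so the apparent poles of $\hat{v}_j$ cancel). Under $\boldsymbol{F}^{-1}$ this core corresponds to the finitely supported lattice functions, a core for the maximal multiplication operator $E$; on it all sums are finite, so applying $\hat{H}$ termwise to $(\boldsymbol{F}f)(\xi)=\sum_\lambda f(\rho_0+\lambda)\psi_\xi(\rho_0+\lambda)\Delta_\lambda$ yields $\hat{H}\boldsymbol{F}f=\boldsymbol{F}(Ef)$ with no convergence issues. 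The orthogonality \eqref{qw-orthogonality} gives $\|\boldsymbol{F}f\|_{\hat{\Delta}}=\|f\|_{\Delta}$, so $\boldsymbol{F}$ is unitary and $\boldsymbol{F}\circ E\circ\boldsymbol{F}^{-1}$ is self-adjoint; since it agrees with the formal operator \eqref{hd1},\eqref{hd2} on $\mathcal{P}$, the two are identified.

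The remaining assertions follow by unitary equivalence. Since $q\in(0,1)$ gives $q^{-\lambda_1}\geq 1$, every eigenvalue $q^{-\lambda_1}-1$ is nonnegative (vanishing only at $\lambda=0$), so $\hat{H}\geq 0$; the eigenvalues fill out the discrete set $\{q^{-m}-1\mid m\in\mathbb{Z}_{\geq 0}\}$, which accumulates only at $+\infty$, so $\hat{H}$ is unbounded with purely discrete spectrum, each value $q^{-m}-1$ occurring with finite multiplicity equal to the number of partitions in $\Lambda$ with largest part $m$. The only genuine subtlety is the functional-analytic bookkeeping—confirming that $\mathcal{P}$ is a core and that the transform singles out the correct self-adjoint extension of the a priori merely symmetric $q$-difference expression—but this is routine, which is precisely why the proposition is immediate once the eigenvalue equation is in hand.
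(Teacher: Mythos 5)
Your proof is correct and takes essentially the same approach as the paper, which derives the $t\to 0$ eigenvalue equation $\hat{H}p_\lambda=(q^{-\lambda_1}-1)p_\lambda$ from the Macdonald--Koornwinder $q$-difference equation and then declares the proposition ``immediate'' from the orthogonality and completeness of the $p_\lambda$. You have simply made explicit the routine details (realness of the kernel, the polynomial core, unitarity of $\boldsymbol{F}$, and the count of eigenvalues $q^{-m}-1$ with finite multiplicities accumulating only at $+\infty$) that the authors leave to the reader.
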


One learns from Theorem \ref{diagonal:thm} and Proposition \ref{dual:prp} that the eigenfunction transforms diagonalizing the difference Toda Hamiltonian $H$ \eqref{Haction}
and  the  $t=0$ Macdonald-Koornwinder difference operator  $\hat{H}$ \eqref{hd1},\eqref{hd2}  are inverses of each other. This fact encodes the bispectral duality of the operators under consideration in the sense of Duistermaat and Gr\"unbaum \cite{dui-gru:differential,gru:bispectral}: the kernel function $\psi_\xi (\rho_0+\lambda)$ of the deformed hyperoctahedral $q$-Whittaker transform $\boldsymbol{F}$ \eqref{ft1}, \eqref{ft2} simultaneously solves the corresponding eigenvalue equations for $H$ and $\hat{H}$ in the discrete variable $\rho_0+\lambda$ and the spectral variable $\xi$, respectively.

Explicit commuting quantum integrals for the dual Hamiltonian $\hat{H}$  \eqref{hd1},\eqref{hd2} are obtained as a $t\to 0$ degeneration of the commuting difference operators in  \cite[\text{Thm.}~5.1]{die:properties}:
\begin{equation}\label{dual-int}
\hat{H}_l= 
\sum_{\substack{ J\subset \{ 1,\ldots ,n\},\, 0\leq |J|\leq l \\ \epsilon_j\in\{ 1,-1\},j\in J}} \hat{U}_{J^c,l -|J|}\hat{V}_{\epsilon J} \hat{T}_{\epsilon J,q},\qquad l=1,\ldots ,n,
\end{equation}
with $\hat{T}_{\epsilon J,q}:=\prod_{j\in J} \hat{T}_{j,q}^{\epsilon_j}$ and
\begin{align*}
\hat{V}_{\epsilon J} := &
\prod_{j\in J}  \frac{\prod_{0\leq r\leq 3} (1-\hat{t}_re^{i\epsilon_j\xi_j}) }{(1-e^{2i\epsilon_j\xi_j})  (1-q e^{2i\epsilon_j\xi_j}) } \prod_{\substack{j\in J \\k\not\in J}} (1-e^{i(\epsilon_j\xi_j+\xi_k)})^{-1}(1-e^{i(\epsilon_j \xi_j-\xi_k)})^{-1}
 \\
& \times \prod_{\substack{j,k\in J \\ j<k}} (1-e^{i(\epsilon_j\xi_j+\epsilon_k \xi_k)})^{-1}(1-qe^{i(\epsilon_j \xi_j+\epsilon_k\xi_k)})^{-1} ,
\end{align*}
\begin{align*}
\hat{U}_{K,p} :=  (-1)^p 
\sum_{\substack{ I \subset K,\,  | I | = p\\ \epsilon_j\in\{ 1,-1\},j\in I}} 
& \bigg( 
 \prod_{j\in I}  \frac{\prod_{0\leq r\leq 3} (1-\hat{t}_re^{i\epsilon_j\xi_j}) }{(1-e^{2i\epsilon_j\xi_j})  (1-q e^{2i\epsilon_j\xi_j}) } \\
&\times   \prod_{\substack{j\in I \\k\in K\setminus I}} (1-e^{i(\epsilon_j\xi_j+\xi_k)})^{-1}(1-e^{i(\epsilon_j \xi_j-\xi_k)})^{-1} \\
 & \times \prod_{\substack{j,k\in I \\ j<k}} (1-e^{i(\epsilon_j\xi_j+\epsilon_k \xi_k)})^{-1}(1-q^{-1}e^{-i(\epsilon_j \xi_j+\epsilon_k\xi_k)})^{-1} \bigg) 
\end{align*}
(so $\hat{H}_1=\hat{H}$). 
The diagonalization in Proposition \ref{dual:prp} now generalizes to the complete system of commuting quantum integrals $\hat{H}_1,\ldots ,\hat{H}_n$ as follows.
\begin{theorem}[Bispectral Dual System]\label{dual:thm} Let $E_l$ ($1\leq l \leq n$) denote the self-adjoint multiplication operator in $ \ell^2(\rho_0+\Lambda,\Delta)$ given by
\begin{subequations}
\begin{equation}\label{dual-spectrum1}
(E_l f)(\rho_0+\lambda):=E_{\lambda ,l} f(\rho_0+\lambda) \qquad (\lambda\in\Lambda)
\end{equation}
(on the domain of $f \in \ell^2(\rho_0+\Lambda,\Delta)$ for which $\langle E_l f,E_l f\rangle_{\Delta}<\infty$),
where
\begin{equation}\label{dual-spectrum2}
E_{\lambda ,l} :=q^{-\lambda_1-\lambda_2\cdots-\lambda_{l-1}}(q^{-\lambda_l}-1)+
t_0^2 q^{-\lambda_1-\lambda_2\cdots-\lambda_{n-1}}(q^{\lambda_n}-1)\delta_{n-l} .
\end{equation}
The $q$-difference operators $\hat{H}_l$  \eqref{dual-int} constitute nonnegative unbounded self-adjoint operators with purely discrete spectra in
$L^2(\mathbb{A},\hat{\Delta}\text{d}\xi)$ that are simultaneously diagonalized by the (inverse) deformed hyperoctahedral $q$-Whittaker transform $\boldsymbol{F}$ \eqref{ft1}, \eqref{ft2}:
\begin{equation}
\hat{H}_l=\boldsymbol{F}  \circ {{E}}_l \circ\boldsymbol{F}^{-1},\qquad l=1,\ldots ,n.
 \end{equation}
 \end{subequations}
\end{theorem}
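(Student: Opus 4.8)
The plan is to derive the joint eigenvalue equations $\hat{H}_l\, p_\lambda=E_{\lambda,l}\, p_\lambda$ ($\lambda\in\Lambda$, $1\le l\le n$) as $t\to 0$ limits of the $q$-difference equations satisfied by the Macdonald-Koornwinder polynomials, and then to read off the asserted diagonalization from the mapping properties of the deformed hyperoctahedral $q$-Whittaker transform recorded in Section \ref{sec4}. At generic $t$ the commuting difference operators of \cite[\text{Thm.}~5.1]{die:properties}, which degenerate to $\hat{H}_l$ \eqref{dual-int} at $t=0$, satisfy $\hat{H}_{l,t}\mathbf{P}_\lambda=\hat{E}_{\lambda,l,t}\mathbf{P}_\lambda$ with explicitly computable eigenvalues $\hat{E}_{\lambda,l,t}$. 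Because the coefficients of $\hat{H}_{l,t}$ converge to those of $\hat{H}_l$ and $\mathbf{P}_\lambda(\xi)\to\psi_\xi(\rho_0+\lambda)$ as $t\to 0$ (cf. the proof of Theorem \ref{diagonal:thm} and Appendix \ref{appA}), the eigenvalue equations pass to the limit and yield $\hat{H}_l\psi_\xi(\rho_0+\lambda)=E_{\lambda,l}\psi_\xi(\rho_0+\lambda)$ in the spectral variable $\xi$, with $E_{\lambda,l}=\lim_{t\to 0}\hat{E}_{\lambda,l,t}$; since the normalization relating $\psi_\xi(\rho_0+\lambda)$ to $p_\lambda(\xi)$ in \eqref{dqwf} is $\xi$-independent, this is equivalent to $\hat{H}_l\, p_\lambda=E_{\lambda,l}\, p_\lambda$.

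The computational core, and the step I expect to be the main obstacle, is the explicit evaluation of $\lim_{t\to 0}\hat{E}_{\lambda,l,t}$ and its identification with the closed form \eqref{dual-spectrum2}. The generic Macdonald-Koornwinder eigenvalue is a sizeable sum of terms assembled from $q^{\pm\lambda_j}$, the deformation parameter $t$, and the boundary parameters $\hat{t}_r$, and the delicate issue is to track precisely which contributions survive the degeneration. I expect the bulk of these to collapse onto the single dominant term $q^{-\lambda_1-\cdots-\lambda_{l-1}}(q^{-\lambda_l}-1)$, while the residual boundary contribution $t_0^2q^{-\lambda_1-\cdots-\lambda_{n-1}}(q^{\lambda_n}-1)$ persists only for the top operator $l=n$, thereby accounting for the factor $\delta_{n-l}$ in \eqref{dual-spectrum2}. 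A reassuring consistency check is provided by the case $l=1$, in which \eqref{dual-spectrum2} reduces to $q^{-\lambda_1}-1$ (as $\delta_{n-1}=0$ for $n>1$) and one recovers exactly Proposition \ref{dual:prp}.

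Granting the eigenvalue equations, the remaining assertions follow from standard spectral theory. The orthogonality relations \eqref{qw-orthogonality}, together with the fact that $\boldsymbol{F}$ \eqref{ft1}, \eqref{ft2} is a unitary isomorphism, show that $\{\psi_\cdot(\rho_0+\lambda)\mid\lambda\in\Lambda\}$ is a complete orthogonal system of joint eigenfunctions for the operators $\hat{H}_l$ in $L^2(\mathbb{A},\hat{\Delta}\text{d}\xi)$, with the real eigenvalues $E_{\lambda,l}$. Consequently $\hat{H}_l=\boldsymbol{F}\circ E_l\circ\boldsymbol{F}^{-1}$ on the domain indicated in \eqref{dual-spectrum1}; since the $E_l$ are commuting self-adjoint multiplication operators on $\ell^2(\rho_0+\Lambda,\Delta)$, their conjugates $\hat{H}_l$ are likewise commuting and self-adjoint, and are simultaneously diagonalized by $\boldsymbol{F}$. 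Nonnegativity follows by inspection: for $l<n$ one has $q^{-\lambda_l}\ge 1$, while for $l=n$, writing $x=q^{\lambda_n}\in(0,1]$, the eigenvalue factors as $q^{-\lambda_1-\cdots-\lambda_{n-1}}(1-x)(1-t_0^2x)/x\ge 0$, using $t_0^2<q^{-1}$ (a consequence of \eqref{tp1}, \eqref{tp2} and \eqref{wp}). Finally, as the $E_{\lambda,l}$ are indexed by partitions and grow without bound, the spectra are purely discrete and the operators $\hat{H}_l$ are unbounded.
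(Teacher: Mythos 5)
Your overall strategy coincides with the paper's: obtain $\hat{H}_l p_\lambda = E_{\lambda,l}\, p_\lambda$ as the $t\to 0$ degeneration of the eigenvalue equations of \cite[\text{Thm.}~5.1]{die:properties} and then conclude via the unitarity of $\boldsymbol{F}$; your spectral-theoretic endgame and the nonnegativity check $(1-x)(1-t_0^2x)/x\ge 0$ are sound. The genuine gap sits exactly at the point you yourself flag as ``the main obstacle'': the identification of the limiting eigenvalue with the closed form \eqref{dual-spectrum2}. You state what you ``expect'' the collapse of the generic Macdonald-Koornwinder eigenvalue to produce, but you do not carry out the computation, and the paper explicitly warns that for $l>1$ this identification is \emph{not} obvious from the explicit eigenvalue formula \cite[Eq.~(5.5)]{die:properties}. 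There is also a normalization issue you gloss over: the operators of \cite[\text{Thm.}~5.1]{die:properties} do not converge to $\hat{H}_l$ \eqref{dual-int} as they stand; the $l$th eigenvalue equation must first be multiplied by the scaling factor $t^{l(n-l)+l(l-1)/2}$ (and one picks up an overall factor $t_0^l$), so the limiting eigenvalue is not simply the naive $t\to 0$ limit of the unrescaled expression.

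The paper's way around the obstacle is the idea your proposal is missing. One first observes (as you do) that the rescaled equation forces $p_\lambda$ to be an eigenfunction of $\hat{H}_l$ with \emph{some} finite eigenvalue, and then reads that eigenvalue off directly from the degenerate operator $\hat{H}_l$ itself: evaluating at $\xi=-ci\rho$ with $\rho=(n,n-1,\ldots,1)$ and $c\to+\infty$, elementary asymptotics of the coefficients $\hat{U}_{K,p}$ and $\hat{V}_{\epsilon J}$ give $m_\lambda = e^{\langle \lambda,\rho\rangle c}(1+o(1))$ and $\hat{H}_l m_\lambda = E_{\lambda,l}e^{\langle \lambda,\rho\rangle c}(1+o(1))$; the triangularity \eqref{qW1} together with $\langle\mu,\rho\rangle<\langle\lambda,\rho\rangle$ for $\mu<\lambda$ transfers these asymptotics to $p_\lambda$, pinning the eigenvalue to $E_{\lambda,l}$ \eqref{dual-spectrum2}. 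Without this device (or an actual execution of the limit computation you defer), your argument establishes that $p_\lambda$ is a joint eigenfunction of the $\hat{H}_l$ but not that the eigenvalues are those asserted in the theorem, and hence not the precise form of the bispectral dual spectrum.
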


\begin{proof}
It suffices to verify that 
\begin{equation*}\label{devec}
\hat{H}_l p_\lambda=E_{\lambda,l}p_\lambda \quad(\lambda\in\Lambda,\ l=1,\ldots ,n).
\end{equation*}
This is achieved  by multiplying the $l$th eigenvalue equation in Eq.~(5.8) of
\cite{die:properties} by a scaling factor $t^{l (n-l)+l(l -1)/2}$ and performing the limit $t\to 0$. Indeed, since the Macdonald-Koornwinder polynomial $\mathbf{p}_\lambda$ converges  to the deformed hyperoctahedral $q$-Whittaker function $p_\lambda$, we see from the explicit formulas for the operators in question that the LHS of the cited eigenvalue equation converges in this limit manifestly to $\hat{H}_l p_\lambda$ (up to an overall factor $t_0^l$). Hence, the  RHS  must also have a finite limit for $t\to 0$, which confirms that $p_\lambda$ is  an eigenfunction of $\hat{H}_l $ (using again that $\mathbf{p}_\lambda\stackrel{t\to 0}{\longrightarrow} p_\lambda$).  For $l >1$ it is not obvious from \cite[Eq.~(5.5)]{die:properties}  that
 the (limiting) eigenvalue is indeed given by $E_{\lambda ,l}$ \eqref{dual-spectrum2}, but this can be deduced quite easily from the asymptotics of $m_\lambda$ and $\hat{H}_l m_\lambda$ 
at $\xi=-ci \rho$, $\rho :=(n,n-1,\ldots,2,1)$ for $c\to +\infty$. Indeed,
one readily computes that for $c\to +\infty$:
$m_\lambda=e^{\langle \lambda ,\rho\rangle c}(1+o(1))$ and
$\hat{H}_l m_\lambda =E_{\lambda ,l}e^{\langle \lambda ,\rho\rangle c}(1+o(1))$
(using the explicit formula for $\hat{H}_l$ and the asymptotics
$$  \frac{\prod_{0\leq r\leq 3} (1-\hat{t}_re^{i\epsilon \xi_j}) }{(1-e^{2i\epsilon \xi_j})  (1-q e^{2i\epsilon \xi_j}) }
\stackrel{c\to +\infty}{\longrightarrow} \begin{cases} t_0^2 &\text{if}\ \epsilon =1 \\  1&\text{if}\ \epsilon =-1
\end{cases}\qquad (1\leq j\leq n)
$$
 and
$$
(1-q^a e^{i\epsilon (\xi_j\pm \xi_k)})^{-1}\ \stackrel{c\to +\infty}{\longrightarrow}\begin{cases}
0 &\text{if}\ \epsilon =1 \\ 1 &\text{if}\ \epsilon =-1
\end{cases}
\qquad ( 1\leq j<k\leq n) ,
$$
where $a\in \{ 1,0,-1\} $).
But then also
$p_\lambda=e^{\langle \lambda ,\rho\rangle c}(1+o(1))$ and
$\hat{H}_l p_\lambda =E_{\lambda ,l}e^{\langle \lambda ,\rho\rangle c}(1+o(1))$ for $c\to +\infty$ by
 the triangularity \eqref{qW1} and the property that
$\langle \mu,\rho\rangle<\langle \lambda,\rho\rangle$ if $\mu<\lambda$. The upshot is that the eigenvalue of $\hat{H}_l$ on the eigenpolynomial $p_\lambda$ must be equal to $E_{\lambda,l}$.
\end{proof}

The $q$-difference operators  $\hat{H}_l$ \eqref{dual-int} commute in the space of $W$-invariant trigonometric polynomials on $\mathbb{T}$.
It is clear from Theorem \ref{dual:thm} that this commutativity extends in the Hilbert space in the resolvent sense: for $$z_l \not\in\sigma (\hat{H}_l):=\{ E_{\lambda ,l}\mid\lambda\in\Lambda\}\subset [0,+\infty) \qquad
(l=1,\ldots ,n) $$
the resolvents $(\hat{H}_1-z_1\text{I})^{-1},\ldots ,(\hat{H}_n-z_n\text{I})^{-1}$ of the unbounded operators 
$\hat{H}_1,\ldots,\hat{H}_n$ mutually commute as bounded operators in $L^2(\mathbb{A},\hat{\Delta}\text{d}\xi)$.

Theorem \ref{dual:thm} and Section \ref{sec5}  lift the bispectral duality of $H$ \eqref{Haction} and $\hat{H}$  \eqref{hd1},\eqref{hd2}  to the complete systems of commuting quantum integrals.
The bispectral dual integrable system $\hat{H}_1,\ldots ,\hat{H}_n$ associated with our difference Toda chain can actually be identified as the strong-coupling limit ($t=q^g$, $g\to +\infty$) of a trigonometric Ruijsenaars-type
difference Calogero-Moser system with hyperoctahedral symmetry \cite{die:difference}.
Analogous bispectral dual systems were linked previously to the open quantum Toda chain and
Ruijsenaars' open difference Toda chain.
Specifically, the open quantum Toda chain and the strong-coupling limit of Ruijsenaars' rational difference
Calogero-Moser system turn out to be bispectral duals of each other
\cite{bab:equations,hal-rui:kernel,skl:bispectrality,koz:aspects}, and
 the same holds true for  Ruijsenaars' open difference Toda chain and the
$t=0$ trigonometric/hyperbolic Ruijsenaars-Macdonald operators \cite{ger-leb-obl:q-deformed,hal-rui:kernel,bor-cor:macdonald}.
Dualities of this type were actually first established for the corresponding particle systems within the realms of
 classical mechanics: the action-angle transforms linearizing the open Toda chain and the strong-coupling limit of the rational Ruijsenaars-Schneider system are the inverses of each other and the same holds true for
the action-angle transforms for Ruijsenaars' open relativistic Toda chain and the strong-coupling limit of the hyperbolic Ruijsenaars-Schneider system \cite{rui:relativistic,feh:action}.

\section{Parameter reductions}\label{sec7}
As already anticipated at the end of Section \ref{sec2},
for $\hat{t}_2=-\hat{t}_3=q^{1/2}$ and $\hat{t_0}=-\hat{t}_1\to 1$ (so $t_0=-t_1\to 1$ and $t_2=-t_3\to q^{1/2}$) the difference Toda Hamiltonian $H$ \eqref{Haction}  and the deformed hyperoctahedral $q$-Whittaker functions $p_\lambda (\xi)$, $\lambda\in\Lambda$ degenerate to a difference Toda Hamiltonian and $q$-Whittaker functions of type $D_n$ \cite{sur:discrete,kuz-tsy:quantum,eti:whittaker,sev:quantum,che:whittaker}. Even though formally these limiting values of the parameters do not respect our restriction that
$\hat{t}_r\in (-1,1)\setminus \{ 0\}$ (for $r=0,\ldots ,3$), it is readily inferred from the formulas that the results of Sections \ref{sec3}--\ref{sec7} nevertheless remain valid at this specialization of the parameters.

In this section we are concerned with the behavior for $\hat{t}_0\to 0$.  In this limit, the difference Toda chain  turns out to be governed by a Hamiltonian of the form
\begin{align}\label{H0}
\text{H} =&  T_1+ \sum_{j=2}^{n}  (1-q^{x_{j-1}-x_{j}})T_j +\sum_{j=1}^{n-1} (1-q^{x_j-x_{j+1}})T_j^{-1}\\
&  +\Bigl(\prod_{1\leq r<s\leq 3} (1-\hat{t}_r\hat{t}_sq^{x_n-1})\Bigr)  (1-q^{x_n})T_{n}^{-1} \nonumber \\
&+(\hat{t}_1+\hat{t}_2+\hat{t}_3)q^{x_n}+\hat{t}_1\hat{t}_2\hat{t}_3
q^{2x_n}(q^{x_{n-1}-x_n}+q^{-x_n-1}-1-q^{-1})  .\nonumber
\end{align}
When $\hat{t}_3=0$, the Hamiltonian in question constitutes a Ruijsenaars-type difference counterpart
of the quantum Toda chain
with one-sided boundary potentials of Morse type \cite{skl:boundary,ino:finite}. If in addition $\hat{t}_2=-1$,
then the difference Toda chain under consideration amounts to a quantization of a relativistic Toda chain with boundary potentials 
 introduced by Suris \cite{sur:discrete,kuz-tsy:quantum}.
For $\hat{t}_1=\hat{t}_2=\hat{t}_3=0$ and for $\hat{t}_1=-\hat{t}_2=q^{1/2}$ with $\hat{t}_3=-1$, we recover in turn hyperoctahedral difference Toda chains of type $B_n$ and $C_n$ that are diagonalized by
 $q$-Whittaker functions of  type $C_n$ and $B_n$, respectively \cite{eti:whittaker,sev:quantum,che:whittaker}.
Again, even though formally none of these specializations respect our restriction that
$\hat{t}_r\in (-1,1)\setminus \{ 0\}$ (for $r=1,2,3$), it is clear that the formulas below in fact do remain valid.

\subsection{Deformed hyperoctahedral $q$-Whittaker function}
For $\hat{t}_0\to 0$,  the deformed hyperoctahedral $q$-Whittaker functions $p_\lambda (\xi)$ \eqref{qW1}, \eqref{qW2} degenerate into a three-parameter  family of orthogonal polynomials
$\text{p}_\lambda(\xi)$, $\lambda\in\Lambda$
associated with the weight function
\begin{equation*}
\hat{\Delta} (\xi )=\frac{1}{(2\pi)^n}
\prod_{1\leq j<k\leq n}  \left| (e^{i(\xi_j+\xi_k)},e^{i(\xi_j-\xi_k)})_\infty \right|^2
\prod_{1\leq j\leq n}\left|
\frac{(e^{2i\xi_j})_\infty}{\prod_{1\leq r\leq 3} (\hat{t}_r e^{i\xi_j})_\infty}   \right|^2 .
\end{equation*}
The orthogonality relations for these polynomials read (cf. Eq. \eqref{qw-orthogonality})
\begin{equation}
\int_{\mathbb{A}} \text{p}_\lambda (\xi )\overline{ \text{p}_\mu (\xi)}\, \hat{\Delta}(\xi)\text{d}\xi =
\begin{cases}
{\Delta}_\lambda^{-1} &\text{if}\ \lambda =\mu ,\\
0&\text{otherwise},
\end{cases}
\end{equation}
where
\begin{equation*}
{\Delta}_\lambda= \frac{{\Delta}_0}{(q)_{\lambda_n}\prod_{1\leq r<s\leq 3} (\hat{t}_r\hat{t}_s)_{\lambda_n}}
\prod_{1\leq j<n}\frac{1}{(q)_{\lambda_j-\lambda_{j+1}}} 
\end{equation*}
with
\begin{equation*}
{\Delta}_0=(q)_\infty \prod_{1\leq r<s\leq 3} (\hat{t}_r\hat{t}_s)_\infty  .
\end{equation*}
For $n=1$, the limit $p_\lambda\stackrel{\hat{t}_0\to 0}{\longrightarrow} \text{p}_\lambda$ amounts to a well-known reduction from the Askey-Wilson polynomials to the continuous dual $q$-Hahn polynomials \cite{koe-les-swa:hypergeometric}.

\subsection{Hamiltonian}
The difference Toda eigenvalue equation
$H\psi_\xi =\hat{E}(\xi)\psi_\xi$ becomes in the limit $\hat{t}_0\to 0$ of the form $\text{H}\phi_\xi =\hat{E}(\xi)\phi_\xi$
with $\phi_\xi:\Lambda\to\mathbb{C}$ given by $\phi_\xi(\lambda)=\text{p}_\lambda(\xi)$ ($\xi\in\mathbb{A}$, $\lambda\in\Lambda)$, where $\text{H}$ \eqref{H0} acts on $f:\Lambda\to \mathbb{C}$ via
\begin{equation}\label{Haction0}
(\text{H} f)(\lambda)=
\sum_{\substack{1\leq j\leq n\\ \lambda +e_j\in\Lambda}} v_j^+(\lambda) f (\lambda +e_j)+ \sum_{\substack{1\leq j\leq n\\ \lambda -e_j\in\Lambda}}v_j^-(\lambda)f (\lambda -e_j) 
+u(\lambda )f (\lambda) ,\end{equation} 
with
\begin{align*}
v_j^+(\lambda) =&
(1-q^{\lambda_{j-1}-\lambda_j}), \\
v_j^-(\lambda) =&
(1-q^{\lambda_{j}-\lambda_{j+1}}) \Bigl((1-q^{\lambda_n})\prod_{1\leq r<s\leq 3} (1-\hat{t}_r\hat{t}_sq^{\lambda_n-1}) \Bigr)^{\delta_{n-j}} ,\\
u(\lambda )=& (\hat{t}_1+\hat{t}_2+\hat{t}_3)q^{\lambda_n}+\hat{t}_1\hat{t}_2\hat{t}_3
q^{2\lambda_n}(q^{\lambda_{n-1}-\lambda_n}+q^{-\lambda_n-1}-1-q^{-1}) 
\end{align*}
(subject to the convention that $\lambda_0=+\infty$ and $\lambda_{n+1}=-\infty$). 

\subsection{Diagonalization and integrability}
Let $\mathbf{F}: \ell^2(\Lambda,{\Delta})\to L^2(\mathbb{A},\hat{\Delta}\text{d}\xi)$
denote the ($\hat{t}_0\to 0$ degenerate) Hilbert space isomorphism determined by the orthogonal basis $\text{p}_\lambda$, $\lambda\in\Lambda$:
\begin{subequations}
\begin{equation}\label{ft10}
(\mathbf{F}{f})(\xi)= \langle {f},\phi_\xi \rangle_{{\Delta}}=\sum_{\lambda\in\Lambda}f(\lambda)
\overline{\phi_\xi (\lambda)}{\Delta}_\lambda
\end{equation}
(${f}\in \ell^2(\Lambda,{\Delta})$) 
with
\begin{equation}\label{ft20}
(\mathbf{F}^{-1}\hat{f})(\lambda) = \langle \hat{f},\overline{\phi ( \lambda)}\rangle_{\hat{\Delta}}=
\int_A \hat{f} (\xi) \phi_\xi (\lambda)\hat{\Delta}(\xi)\text{d}\xi
\end{equation}
\end{subequations}
($\hat{f}\in L^2(\mathbb{A},\hat{\Delta} \text{d}\xi)$), and let $\hat{{E}}_l:L^2(\mathbb{A},\hat{\Delta} \text{d}\xi)\to L^2(\mathbb{A},\hat{\Delta} \text{d}\xi)$ ($l=1,\ldots ,n$) be the multiplication operators defined in accordance with Section \ref{sec5}.

The commuting bounded self-adjoint operators 
$\text{H}_1,\ldots,\text{H}_n$ (with absolutely continuous spectra) in $\ell^2(\Lambda,{\Delta})$ given by
\begin{equation}
\text{H}_l=\mathbf{F}^{-1}  \circ \hat{{E}}_l \circ\mathbf{F},\qquad l=1,\ldots ,n,
 \end{equation}
constitute a complete system of quantum integrals for
the difference Toda Hamiltonian $\text{H}_1=\text{H}$ \eqref{Haction0}.

 \subsection{Bispectral dual system}
 Let  $\hat{\text{H}}_1,\ldots ,\hat{\text{H}}_n$ denote the commuting $q$-difference operators
 in Eq.  \eqref{dual-int} with $\hat{t}_0=0$ and let $\text{E}_1,\ldots,\text{E}_n$ be
 the self-adjoint multiplication operators in $ \ell^2(\Lambda,\Delta)$ given by (cf. Eqs. \eqref{dual-spectrum1}, \eqref{dual-spectrum2})
 \begin{subequations}
\begin{equation}
(\text{E}_l f)(\lambda)=\text{E}_{\lambda ,l} f(\lambda) \qquad (\lambda\in\Lambda, \, l=1,\ldots ,n)
\end{equation}
(on the domain of $f \in \ell^2(\Lambda,\Delta)$ for which $\langle \text{E}_l f,\text{E}_l f\rangle_{\Delta}<\infty$),
with 
\begin{equation}
\text{E}_{\lambda ,l} =q^{-\lambda_1-\lambda_2\cdots-\lambda_{l-1}}(q^{-\lambda_l}-1) .
\end{equation}
\end{subequations}
Then one has that
\begin{equation}
\hat{\text{H}}_l=\mathbf{F}  \circ {\text{E}}_l \circ\mathbf{F}^{-1},\qquad l=1,\ldots ,n,
 \end{equation}
 i.e. the $q$-difference operators constitute nonnegative unbounded self-adjoint operators with purely discrete spectra in
$L^2(\mathbb{A},\hat{\Delta}\text{d}\xi)$ that are simultaneously diagonalized by the three-parameter (inverse) deformed hyperoctahedral $q$-Whittaker transform $\mathbf{F}$ \eqref{ft10}, \eqref{ft20}.

\section{Scattering}\label{sec8}
In Ref. \cite{die:scattering} the scattering operator for a wide class of quantum lattice models was determined by stationary-phase methods originating from Ref. \cite{rui:factorized}. It follows from the diagonalization in Theorem \ref{diagonal:thm} that our difference Toda chains fit within this class of lattice models. Indeed,  the deformed hyperoctahedral $q$-Whittaker functions $p_\lambda$, $\lambda\in\Lambda$ belong to the family of orthogonal polynomials defined in
 \cite[\text{Sec.}~2]{die:scattering}, since the orthogonality weight function $\hat{\Delta}(\xi)$ \eqref{plancherel} 
 is of the indicated form (with $R=BC_n$) and moreover meets the demanded analyticity requirements.
We will close by briefly indicating how the general scattering results from Ref. \cite[\text{Sec.}~4.2]{die:scattering} specialize in the present difference Toda setting.

Let $\mathcal{H}_0$ be the self-adjoint discrete Laplacian in $\ell^2(\Lambda)$ of the form
\begin{equation*}
(\mathcal{H}_0 f)(\lambda)
:=
\sum_{\substack{1\leq j \leq n\\ \lambda+e_j\in\Lambda}} f(\lambda+e_j)
+\sum_{\substack{1\leq j \leq n\\ \lambda-e_j\in\Lambda}}   f(\lambda-e_j)
\qquad (f\in\ell^2(\Lambda)),
\end{equation*}
and let $\mathcal{H}$ denote the pushforward
\begin{equation}\label{pfH}
\mathcal{H}:= \boldsymbol{\Delta}^{1/2} H \boldsymbol{\Delta}^{-1/2}
\end{equation}
of the difference Toda Hamiltonian $H$ \eqref{Haction} onto the Hilbert space $\ell^2(\Lambda)$ via the Hilbert space isomorphism
$\boldsymbol{\Delta}^{1/2}:\ell^2(\rho_0+\Lambda,\Delta)\to \ell^2(\Lambda)$  given by
\begin{equation}\label{embed1}
(\boldsymbol{\Delta}^{1/2}f)(\lambda):=\Delta^{1/2}_\lambda f(\rho_0+\lambda)
\qquad (f\in \ell^2(\rho_0+\Lambda,\Delta) )
\end{equation}
(where $\boldsymbol{\Delta}^{-1/2}:=(\boldsymbol{\Delta}^{1/2})^{-1}$). Clearly, one has by Theorem \ref{diagonal:thm} that
\begin{equation}\label{F}
\mathcal{H}=\boldsymbol{\mathcal{F}}^{-1} \hat{E} \boldsymbol{\mathcal{F}}
\quad
\text{with} 
\quad
\boldsymbol{\mathcal{F}}:= \boldsymbol{\hat{\Delta}}^{1/2} \boldsymbol{F} \boldsymbol{\Delta}^{-1/2} ,
\end{equation}
where $\boldsymbol{\hat{\Delta}}^{1/2}:L^2(\mathbb{A},\hat{\Delta}\text{d}\xi)\to L^2(\mathbb{A})$ denotes the Hilbert space isomorphism given by
\begin{equation}
(\boldsymbol{\hat{\Delta}}^{1/2}\hat{f})(\xi):= \hat{\Delta}^{1/2}(\xi)\hat{f}(\xi)
\qquad (\hat{f}\in L^2(\mathbb{A},\hat{\Delta}\text{d}\xi))
\end{equation}
(and $\hat{E}$ \eqref{E} is now regarded as a self-adjoint bounded multiplication operator in $L^2(\mathbb{A})$).
Moreover, it is elementary that the spectral decomposition of the discrete Laplacian $\mathcal{H}_0$ is given by
$$
\mathcal{H}_0=\boldsymbol{\mathcal{F}}_0^{-1} \hat{E} \boldsymbol{\mathcal{F}}_0,
$$
where $\boldsymbol{\mathcal{F}}_0:\ell^2(\Lambda)\to L^2(\mathbb{A})$ denotes the Fourier isomorphism
\begin{subequations}
\begin{equation}
(\boldsymbol{\mathcal{F}}_0{f})(\xi):=\sum_{\lambda\in\Lambda}f(\lambda)
\overline{\chi_\xi (\lambda)}
\end{equation}
(${f}\in \ell^2(\Lambda)$) 
with the inversion formula
\begin{equation}
(\boldsymbol{\mathcal{F}}_0^{-1}\hat{f})(\lambda) =
\int_{\mathbb{A}} \hat{f} (\xi) \chi_\xi (\lambda)\text{d}\xi
\end{equation}
\end{subequations}
($\hat{f}\in L^2(\mathbb{A})$). Here we have employed the anti-invariant Fourier kernel
$$
 \chi_\xi (\lambda):=\frac{1}{(2\pi )^{n/2}\, i^{n^2}} \sum_{w\in W} \text{sign}(w) e^{i\langle w(\rho +\lambda ) ,\xi\rangle} ,
$$
with $\text{sign}(w)=\epsilon_1\cdots\epsilon_n\text{sign}(\sigma)$ for $w=(\sigma,\epsilon)\in W=S_n\ltimes \{ 1,-1\}^n$ and 
$\rho= (n,n-1,\ldots,2,1)$. Notice that $\mathcal{F}_0$ is recovered from $\mathcal{F}$ in the limit $q\to 0$,
$\hat{t}_r\to 0$ ($r=0,\ldots ,3$).

The scattering operator describing the large-times asymptotics of the difference Toda dynamics
$e^{i\mathcal{H}t}$ relative to the Laplacian's reference dynamics $e^{i\mathcal{H}_0t}$ turns out to be governed by an
$n$-particle scattering matrix $\hat{ {\mathcal S}} (\xi)$ that factorizes in two-particle pair matrices and one-particle boundary matrices:
\begin{subequations}
\begin{equation}
\hat{ {\mathcal S}} (\xi)
 := \prod_{1\leq j<k\leq n} s(\xi_j-\xi_k)s(\xi_j+\xi_k)\prod_{1\leq j\leq n} s_0(\xi_j) ,
\end{equation}
with
\begin{equation}
s(x):=\frac{(qe^{ix})_\infty }{(qe^{-ix})_\infty }\quad\text{and}\quad  
s_0(x):=
\frac{(qe^{2ix})_\infty }{(qe^{-2ix})_\infty }\prod_{0\leq r\leq 3}\frac{(\hat{t}_re^{-ix})_\infty }{(\hat{t}_re^{ix})_\infty }.
\end{equation}
\end{subequations}
To make the latter statement precise, let us denote by $C_0(\mathbb{A}_{\text{reg}})$ the dense subspace of $L^2(\mathbb{A})$ consisting of smooth test functions with compact support in
 the open dense subset $\mathbb{A}_{\text{reg}}\subset\mathbb{A}$ on which the components of the gradient
$$\nabla \hat{E}(\xi)=(-2\sin(\xi_1),\ldots,-2\sin(\xi_n)),\quad \xi\in\mathbb{A}$$ do not vanish and are all distinct in absolute value.
 We now define an unitary multiplication operator $ \hat{\mathcal S} : L^2(\mathbb{A},\text{d}\xi)\to  L^2(\mathbb{A},\text{d}\xi)$ via its restriction to $C_0(\mathbb{A}_{\text{reg}})$ as follows:
 \begin{equation}
 ( \hat{\mathcal S}\hat{f})(\xi):=  \hat{\mathcal S}(w_\xi \xi )\hat{f}(\xi)\qquad
 (\hat{f}\in C_0(\mathbb{A}_{\text{reg}}),
  \end{equation}
 where $w_\xi\in W$ for $\xi\in \mathbb{A}_{\text{reg}}$ is such that the components of $w_\xi \nabla \hat{E}(\xi)$
 are all positive and reordered from large to small.

Theorem~4.2 and Corollary~4.3 of
Ref. \cite{die:scattering} then provide the following explicit formulas for the wave operators and scattering operator of our difference Toda chain.
\begin{theorem}[Wave and Scattering Operators]\label{scattering:thm}
  The operator limits
\begin{equation*}
\Omega^{\pm} :=s-\lim_{t\to \pm \infty}  e^{i t  \mathcal{H}}e^{-it \mathcal{H}_{0}}
\end{equation*}
converge in the strong $\ell^2(\Lambda)$-norm topology and the corresponding wave operators $\Omega^\pm$ intertwining the difference Toda
 dynamics $e^{i\mathcal{H}t}$  with the discrete Laplacian's dynamics $e^{i\mathcal{H}_0t}$
are given by unitary operators in $\ell^2(\Lambda )$ of the form
\begin{equation*}
\Omega^\pm = \boldsymbol{\mathcal{F}}^{-1} \circ \hat{\mathcal S}^{\mp 1/2}  \circ \boldsymbol{\mathcal{F}}_0,
\end{equation*}
where the branches of the square roots are to be chosen such that
\begin{equation*}
s(x)^{1/2}=\frac{(qe^{ix})_\infty }{|(qe^{ix})_\infty | }
\quad\text{and}\quad
 s_0(x)^{1/2}=\frac{(qe^{2ix})_\infty }{|(qe^{2ix})_\infty | }\prod_{0\leq r\leq 3}\frac{|(\hat{t}_re^{ix})_\infty |}{(\hat{t}_re^{ix})_\infty }  .
 \end{equation*}

Hence, the scattering operator relating the large-times asymptotics of the
difference Toda
 dynamics $e^{i\mathcal{H}t}$  for $t\to - \infty$ and $t\to +\infty$ is given by the unitary operator
\begin{equation*}
\mathcal{S}:=(\Omega^+)^{-1} \Omega^- =  \boldsymbol{\mathcal{F}}_0^{-1}  \circ \hat{\mathcal S} \circ  \boldsymbol{\mathcal{F}}_0 .
\end{equation*}
\end{theorem}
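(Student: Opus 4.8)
The plan is to obtain the statement as a specialization of the general stationary-phase scattering analysis of \cite[Thm.~4.2, Cor.~4.3]{die:scattering}, so that the work reduces to checking that the present difference Toda dynamics satisfies the hypotheses of that framework and to identifying the pertinent scattering matrix $\hat{\mathcal S}(\xi)$. Two structural ingredients drive the general result: a reference dynamics $\mathcal{H}_0$ and a perturbed dynamics $\mathcal{H}$ that are diagonalized, through unitary transforms $\boldsymbol{\mathcal{F}}_0$ and $\boldsymbol{\mathcal{F}}$, by a \emph{common} bounded dispersion relation; and the large-$\lambda$ plane-wave asymptotics of the eigenfunction kernel. The first ingredient is already at hand, since by Theorem \ref{diagonal:thm} and Eq.~\eqref{F} one has $\mathcal{H}=\boldsymbol{\mathcal{F}}^{-1}\hat{E}\boldsymbol{\mathcal{F}}$ and $\mathcal{H}_0=\boldsymbol{\mathcal{F}}_0^{-1}\hat{E}\boldsymbol{\mathcal{F}}_0$ with the same self-adjoint multiplication operator $\hat{E}$ of \eqref{E}; consequently the wave and scattering operators are governed entirely by the comparison of the two diagonalizing transforms.

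First I would confirm membership in the admissible class of \cite[Sec.~2]{die:scattering}. The orthogonal polynomials $p_\lambda$ carry the orthogonality weight $\hat{\Delta}(\xi)$ \eqref{plancherel}, which is manifestly of the required $R=BC_n$ product form and, for the parameters constrained by \eqref{wp}, extends to an analytic function across $\mathbb{A}$: each $q$-Pochhammer factor is entire in the variables $e^{i\xi_j}$, and the only zeros of the Weyl-type denominators lie on the boundary $\partial\mathbb{A}$. This places $\mathcal{H}$ among the lattice models covered by the general theory and supplies the analyticity of the attendant $c$-function demanded there.

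Next I would read off $\hat{\mathcal S}(\xi)$ from the asymptotics of the transform kernel. In the framework of \cite{die:scattering} the $n$-particle scattering matrix equals the product of the \emph{phases} of the $c$-function factors out of which $\hat{\Delta}$ is assembled, the sign and $e^{i\xi_j}$-monomial parts being reabsorbed into the anti-invariant $\rho$-shifted free kernel $\chi_\xi(\lambda)$. Splitting the pair factors $|(e^{i(\xi_j\pm\xi_k)})_\infty|^2$ and the boundary factors $|(e^{2i\xi_j})_\infty/\prod_{0\leq r\leq3}(\hat{t}_re^{i\xi_j})_\infty|^2$ by means of $(e^{ix})_\infty=(1-e^{ix})(qe^{ix})_\infty$, the surviving nontrivial phases are precisely $s(\xi_j\pm\xi_k)$ and $s_0(\xi_j)$, which gives the claimed factorization of $\hat{\mathcal S}(\xi)$. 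The half-order branch is then pinned down exactly as in \cite{die:scattering} by requiring $\hat{\mathcal S}^{\mp1/2}$ to reproduce the $c$-function itself rather than its square, which is the content of the prescriptions $s(x)^{1/2}=(qe^{ix})_\infty/|(qe^{ix})_\infty|$ and its $s_0$ analogue.

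Finally, the strong convergence of the limits defining $\Omega^\pm$ and the explicit formulas $\Omega^\pm=\boldsymbol{\mathcal{F}}^{-1}\hat{\mathcal S}^{\mp1/2}\boldsymbol{\mathcal{F}}_0$ and $\mathcal{S}=\boldsymbol{\mathcal{F}}_0^{-1}\hat{\mathcal S}\boldsymbol{\mathcal{F}}_0$ follow from the stationary-phase estimates of \cite{rui:factorized,die:scattering}: on the regular set $\mathbb{A}_{\text{reg}}$ the components of $\nabla\hat{E}(\xi)$ are nonvanishing and pairwise distinct in absolute value, so the stationary points of the governing oscillatory integrals are non-degenerate and the reordering Weyl element $w_\xi$ selecting the outgoing momenta produces the multiplier $\hat{\mathcal S}(w_\xi\xi)$. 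I expect the principal obstacle to be this asymptotic step, namely matching the plane-wave ($c$-function) asymptotics of the deformed hyperoctahedral $q$-Whittaker kernel with the precise data required by \cite{die:scattering} and carrying out the attendant branch bookkeeping for the square roots; granting the general theorems, the remaining verifications are the structural checks indicated above.
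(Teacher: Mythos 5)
Your proposal follows essentially the same route as the paper: the paper likewise obtains Theorem \ref{scattering:thm} by verifying that the weight function $\hat{\Delta}(\xi)$ \eqref{plancherel} is of the $R=BC_n$ form required in \cite[\text{Sec.}~2]{die:scattering} and meets its analyticity requirements, and then simply invoking Theorem~4.2 and Corollary~4.3 of that reference. Your additional remarks on reading off $\hat{\mathcal S}$ from the phases of the $c$-function factors and on the branch bookkeeping are consistent with (and somewhat more explicit than) what the paper records.
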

 The  degenerate case of  the difference Toda chain discussed in Section \ref{sec7} is also covered by Theorem \ref{scattering:thm}, upon setting
$\rho_0$ equal to the nulvector in Eq. \eqref{embed1},
 replacing $H$ \eqref{Haction} by $\text{H}$ \eqref{Haction0} in $\mathcal{H}$ \eqref{pfH} 
 and $\boldsymbol{F}$ \eqref{ft1}, \eqref{ft2} by $\mathbf{F}$ \eqref{ft10}, \eqref{ft20} in $\boldsymbol{\mathcal{F}}$ \eqref{F}, and
 substituting  $\hat{t}_0=0$ overall.

\appendix

\section{Macdonald-Koornwinder polynomials}\label{appA}
This appendix collects some key properties of the Macdonald-Koornwinder multivariate Askey-Wilson polynomials \cite{koo:askey-wilson,die:properties,mac:affine}. In the case of one variable ($n=1$), the properties below specialize to well-known formulas for the
Askey-Wilson polynomials (see e.g. \cite{koe-les-swa:hypergeometric}).

The Macdonald-Koornwinder polynomials  $\mathbf{p}_{\lambda} (\xi)$ ($\lambda\in\Lambda$, $\xi \in \mathbb{T}$)  are defined as polynomials of the type in Eqs. \eqref{qW1}, \eqref{qW2}, \eqref{ip}  associated with the weight function \cite[\text{Sec.}~5]{koo:askey-wilson},  \cite[\text{Ch.}~5.3]{mac:affine}:
\begin{equation*}\label{MKweight}
\boldsymbol{\hat{\Delta}} (\xi )=\frac{1}{(2\pi)^n}
\prod_{1\leq j\leq n}\Bigl|
\frac{(e^{2i\xi_j})_\infty}{\prod_{0\leq r\leq 3} (\hat{t}_re^{i\xi_j})_\infty}  \Bigr|^2 
\prod_{1\leq j<k\leq n}\Bigl| 
\frac{(e^{i(\xi_j+\xi_k)},e^{i(\xi_j-\xi_k)})_\infty}{(te^{i(\xi_j+\xi_k)},te^{i(\xi_j-\xi_k)})_\infty}\Bigr|^2
,
\end{equation*}
with $q\in (0,1)$ and $t,\hat{t}_r\in (-1,1)\setminus \{0\}$ ($r=0,\ldots ,3)$.
For $t\to 0$ this weight function passes into that of Eq. \eqref{plancherel}, whence the polynomials in question degenerate in this limit continuously to the deformed hyperoctahedral $q$-Whittaker functions of Section \ref{sec3}.
Notice in this respect that for $x\in\mathbb{R}$ and $| t | <\varepsilon$ ($<1$) quotients of the form
$(e^{ix})_\infty/ (te^{ix})_\infty$ remain bounded in absolute value by $(-1)_\infty/(\varepsilon)_\infty$, so  we may interchange limits and integration for $t\to 0$ when integrating trigonometric polynomials against the Macdonald-Koornwinder weight function $\boldsymbol{\hat{\Delta}} (\xi )$ over the bounded alcove $\mathbb{A}$ (by dominated convergence). 

The normalized Macdonald-Koornwinder polynomials
\begin{subequations}
\begin{equation}\label{n-pol1}
\mathbf{P}_\lambda (\xi ):=\mathbf{c}_\lambda \mathbf{p}_\lambda (\xi) \qquad (\lambda\in\Lambda_n),
\end{equation}
where
\begin{equation}\label{n-pol2}
\mathbf{c}_\lambda :=
\prod_{1\leq j\leq n} \frac{(\tau_j^2)_{2\lambda_j}}{\prod_{0\leq r\leq3} (t_r\tau_j)_{\lambda_j}}
\prod_{1\leq j<k\leq n}
\frac{(\tau_j\tau_k)_{\lambda_j+\lambda_k}}{(t\tau_j\tau_k)_{\lambda_j+\lambda_k}}
\frac{(\tau_j\tau_k^{-1})_{\lambda_j-\lambda_k}}{(t\tau_j\tau_k^{-1})_{\lambda_j-\lambda_k}}
\end{equation}
\end{subequations}
with $\tau_j:=t^{n-j}t_0$ ($j=1,\ldots ,n$) and $t_r$ ($r=0,\ldots ,3$) given by Eq. \eqref{tp1}, satisfy the following orthogonality relations
\cite[\text{Sec.}~5]{koo:askey-wilson}, \cite[\text{Sec.}~7]{die:properties}, \cite[\text{Ch.}~5.3]{mac:affine}:
\begin{subequations}
\begin{equation}\label{mk-ortho-a}
\int_{\mathbb{A}} \mathbf{P}_\lambda (\xi) \overline{ \mathbf{P}_\mu (\xi)}
\boldsymbol{\hat{\Delta}}(\xi) \text{d}\xi =
\begin{cases}
\boldsymbol{\Delta}_\lambda^{-1} &\text{if}\ \lambda = \mu ,\\
0 &\text{otherwise},
\end{cases}
\end{equation}
with
\begin{align}
\boldsymbol{\Delta}_\lambda  :=&\boldsymbol{\Delta}_0 
 \prod_{1\leq j\leq n}\Biggl( \frac{1-\tau_j^2q^{2\lambda_j}}{1-\tau_j^2} 
\prod_{0\leq r\leq3} 
\frac{(t_r\tau_j)_{\lambda_j}} {(qt_r^{-1}\tau_j)_{\lambda_j}}   \Biggr)  \\
\times &\prod_{1\leq j<k\leq n}\frac{1-\tau_j\tau_kq^{\lambda_j+\lambda_k}}{1-\tau_j\tau_k}
\frac{(t\tau_j\tau_k)_{\lambda_j+\lambda_k}}{(qt^{-1}\tau_j\tau_k)_{\lambda_j+\lambda_k}}
\frac{1-\tau_j\tau_k^{-1}q^{\lambda_j-\lambda_k}}{1-\tau_j\tau_k^{-1}}
\frac{(t\tau_j\tau_k^{-1})_{\lambda_j-\lambda_k}}{(qt^{-1}\tau_j\tau_k^{-1})_{\lambda_j-\lambda_k}} 
\nonumber
\end{align}
and
\begin{equation}\label{mk-ortho-d}
\boldsymbol{\Delta}_0:=
\prod_{1\leq j\leq n} \frac{(q,t^j)_\infty \prod_{0\leq r<s\leq 3} (\hat{t}_r\hat{t}_st^{n-j})_\infty}{(t,\hat{t}_0\hat{t}_1\hat{t}_2\hat{t}_3t^{2n-j-1})_\infty} .
\end{equation}
\end{subequations}

These orthogonal polynomials satisfy moreover a second-order $q$-difference equation
\cite[\text{Sec.}~5]{koo:askey-wilson}, \cite[\text{Ch.}~5.3, 4.4]{mac:affine}:
\begin{align}\label{qde}
&\mathbf{P}_\lambda (\xi ) \sum_{j=1}^n\bigl( q^{-1}\hat{t}_0 \hat{t}_1\hat{t}_2\hat{t}_3t^{2n-1-j}(q^{\lambda_j}-1)+t^{j-1}(q^{-\lambda_j}-1)\bigr)= \\
&\sum_{1\leq j\leq n}
\hat{V}_j(\xi) \bigl(  \mathbf{P}_{\lambda }(\xi-i\log (q) e_j )-\mathbf{P}_\lambda(\xi ) \bigr)
+
\hat{V}_j(-\xi) \bigl(  \mathbf{P}_{\lambda }(\xi+i\log (q) e_j)-\mathbf{P}_\lambda(\xi ) \bigr),\nonumber
\end{align}
with
\begin{equation*}
\hat{V}_{ j}(\xi):=
\frac{\prod_{0\leq r\leq 3} (1-\hat{t}_re^{i\xi_j}) }{(1-e^{2i\xi_j})  (1-q e^{2i\xi_j}) }
\prod_{\substack{1\leq k\leq n\\k\neq j}} \frac{1-te^{i(\xi_j+\xi_k)}}{1-e^{i(\xi_j+\xi_k)}}
\frac{1-te^{i(\xi_j-\xi_k)}}{1-e^{i(\xi_j-\xi_k)}} ,
\end{equation*}
and a Pieri-type recurrence formula  \cite[\text{Sec.}~6]{die:properties}, \cite[\text{Ch.}~5.3, 4.4]{mac:affine}:
\begin{eqnarray}\label{pieri}
\lefteqn{\mathbf{P}_\lambda (\xi ) \sum_{j=1}^n (2\cos (\xi_j)-\hat{\tau}_j-\hat{\tau}_j^{-1})=} &&\\
&&\sum_{\substack{1\leq j\leq n\\ \lambda +e_j\in\Lambda}}
V_j^+(\lambda) \left( \hat{\tau}_j \mathbf{P}_{\lambda +e_j}(\xi )-\mathbf{P}_\lambda(\xi ) \right)
+\sum_{\substack{1\leq j\leq n\\ \lambda -e_j\in\Lambda}}
V_j^-(\lambda) \left(  \hat{\tau}_j^{-1} \mathbf{P}_{\lambda -e_j}(\xi )-\mathbf{P}_\lambda(\xi ) \right),\nonumber
\end{eqnarray}
with $\hat{\tau}_j:=t^{n-j}\hat{t}_0$ ($j=1,\ldots ,n$) and
\begin{align*}
V_j^+(\lambda) &:= \frac{\hat{\tau}_1^{-1}\prod_{0\leq r\leq 3}(1-t_r\tau_jq^{\lambda_j})}{(1-\tau_j^2q^{2\lambda_j})(1-\tau_j^2q^{2\lambda_j+1})}
\prod_{\substack{1\leq k\leq n\\ k\neq j}}
\frac{1-t\tau_j\tau_kq^{\lambda_j+\lambda_k}}{1-\tau_j\tau_kq^{\lambda_j+\lambda_k}}
\frac{1-t\tau_j\tau_k^{-1}q^{\lambda_j-\lambda_k}}{1-\tau_j\tau_k^{-1}q^{\lambda_j-\lambda_k}}
,\\
V_j^-(\lambda) &:= 
 \frac{\hat{\tau}_1\prod_{0\leq r\leq 3}(1-t_r^{-1}\tau_jq^{\lambda_j})}{(1-\tau_j^2q^{2\lambda_j})(1-\tau_j^2q^{2\lambda_j-1})}
\prod_{\substack{1\leq k\leq n\\k\neq j}}
\frac{1-t^{-1}\tau_j\tau_kq^{\lambda_j+\lambda_k}}{1-\tau_j\tau_kq^{\lambda_j+\lambda_k}}
\frac{1-t^{-1}\tau_j\tau_k^{-1}q^{\lambda_j-\lambda_k}}{1-\tau_j\tau_k^{-1}q^{\lambda_j-\lambda_k}} 
\end{align*}
(where the vectors $e_1,\ldots ,e_n$ refer to the standard unit basis of $\mathbb{R}^n$).

\bibliographystyle{amsplain}

\end{document}